\newtheorem{lemma}{Lemma}[section]
\newtheorem{theorem}[lemma]{Theorem}
\newtheorem{observation}[lemma]{Observation}
\newcommand{\mc}{\mathcal}
\newcommand{\sm}{\setminus}
\newcommand{\dic}{\vec{\chi}}
\newcommand{\olra}{\overleftrightarrow}
\newcommand{\dmax}{\Delta_{max}}
\newcommand{\dmin}{\Delta_{min}}
\tikzstyle{vertex}=[circle, draw, inner sep=0pt, minimum size=8pt]
\tikzset{->-/.style={decoration={
  markings,
  mark=at position .5 with {\arrow{>}}},postaction={decorate}}}
\tikzstyle{vertex}=[circle,draw, top color=gray!5, 
\tikzstyle{arc}=[->, > = latex',  thick]
\tikzstyle{edge}=[thick, blue]
\def\centerarc[#1](#2)(#3:#4:#5)
\newcounter{claim}
\newenvironment{proof}[1][]%
 {\noindent {\setcounter{claim}{0}\sc proof ---
   }{#1}{}}{\hfill$\Box$\vspace{2ex}} 
\newenvironment{claim}[1][]%
{\refstepcounter{claim}\vspace{1ex}\noindent{(\it\arabic{claim}){#1}{}}\it}{\vspace{1ex}}
\newenvironment{proofclaim}[1][]%
	{\noindent {}{#1}{}}{ This proves~(\arabic{claim}).\vspace{2ex}}
\title{Four proofs of the directed Brooks' Theorem}
\author{Pierre Aboulker$^1$, Guillaume Aubian$^{1,2}$\\
\small ($1$) DIENS, \'Ecole normale sup\'erieure, CNRS, PSL University, Paris, France\\
\small ($2$) Université de Paris, CNRS, IRIF, F-75006, Paris, France.}
\begin{document}

\maketitle
\begin{abstract}
 We give four new proofs of the directed version of Brook's Theorem and an NP-completeness result. 
\end{abstract}

\section{Introduction}
A \emph{ $k$-colouring} of an undirected graph $G$ is a partition $V_1, \dots, V_k$ of $V(G)$ into $k$ independent sets. The \emph{chromatic number} of $G$, denoted  $\chi(G)$, is the least $k$ such that $G$ admits a $k$-colouring. 
The maximum degree of an undirected graph $G$ is denoted by $\Delta(G)$. It is an easy observation that for every graph $G$, $\chi(G) \leq \Delta(G)+1$. The following classical result of Brooks characterizes the (very few) graphs for which equality holds. 
\begin{theorem}[Brooks' Theorem, \cite{B41}]
A connected graph $G$ satisfies $\chi(G) = \Delta(G) + 1$ if and only if $G$ is an odd cycle or a complete graph. 
\end{theorem}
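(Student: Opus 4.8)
The plan is to establish both implications, the forward one being a direct computation and the converse carrying the weight. If $G=K_n$ then $\chi(G)=n=(n-1)+1=\Delta(G)+1$, and if $G$ is an odd cycle then $\chi(G)=3=2+1=\Delta(G)+1$; this settles the ``if'' direction. For the converse I would argue the contrapositive: assuming $G$ is connected, not complete, and not an odd cycle, I aim to exhibit a proper colouring with $\Delta:=\Delta(G)$ colours. When $\Delta\le 2$ the connected graph $G$ is a path or a cycle, and the only such graphs needing $3$ colours are the odd cycles, which are excluded; so I may assume $\Delta\ge 3$.

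The main tool is a rooted greedy colouring. Fixing any vertex $r$, take a spanning tree and order the vertices by nonincreasing distance from $r$, so that $r$ comes last and every other vertex precedes its parent. Colouring greedily in this order, each vertex except $r$ has at least one later neighbour (its parent), hence at most $\Delta-1$ earlier ones, and so receives a colour from $\{1,\dots,\Delta\}$; the root $r$ can then be coloured as well provided its $\Delta$ neighbours do not already exhaust all $\Delta$ colours. This immediately handles the case where $G$ is not regular: choosing $r$ with $\deg(r)<\Delta$ leaves a free colour for $r$. It also reduces the regular case to $2$-connected graphs: if a $\Delta$-regular $G$ has a cut vertex $v$, then in each block $B$ the vertex $v$ has degree strictly less than $\Delta$, so $B$ is not $\Delta$-regular and is $\Delta$-colourable by the previous observation (or because $\Delta(B)<\Delta$); permuting colours blockwise to agree on the colour of $v$ glues these into a $\Delta$-colouring of $G$.

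It remains to colour a $\Delta$-regular, $2$-connected graph $G$ with $\Delta\ge 3$ that is not complete, and the crux is the following structural lemma: there exist vertices $a,b,c$ with $b,c\in N(a)$, with $b$ and $c$ nonadjacent, and with $G-\{b,c\}$ connected. Granting it, I would order $b,c$ first, assign them the same colour (legal since $bc\notin E(G)$), then append a spanning tree of the connected graph $G-\{b,c\}$ rooted at $a$, ordered by nonincreasing distance so that $a$ is last. Greedy colouring then gives every internal vertex a colour in $\{1,\dots,\Delta\}$ as before, and although $a$ has $\Delta$ neighbours, two of them ($b$ and $c$) share a colour, so at most $\Delta-1$ colours appear on $N(a)$ and $a$ too can be coloured.

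The hard part, and the step I expect to be the main obstacle, is proving this lemma. If $G$ is $3$-connected I would take any two nonadjacent vertices at distance $2$ (which exist since $G$ is connected and not complete) together with a common neighbour as $b,a,c$; then $G-\{b,c\}$ is connected because deleting two vertices from a $3$-connected graph cannot disconnect it. If $G$ is only $2$-connected, I would first choose $a$ so that $G-a$ is connected but not $2$-connected (possible since $G$ has a $2$-cut), and then work in the block tree of $G-a$: picking two distinct end-blocks and, in each, a neighbour of $a$ that is not that block's cut vertex yields $b$ and $c$. These lie in different end-blocks, so $bc\notin E(G)$; that $a$ has such a neighbour in each end-block follows from the $2$-connectivity of $G$, since otherwise the block's cut vertex would separate $G$; and $G-\{b,c\}$ stays connected because removing a non-cut vertex from an end-block preserves connectivity of $G-a$, while $a$, still of degree at least $\Delta-2\ge 1$, reattaches to the rest. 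Verifying these connectivity claims carefully is where the real work lies.
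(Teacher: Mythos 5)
Your proof is correct: it is precisely Lov\'asz's greedy-colouring proof of Brooks' Theorem \cite{L75} --- the standard reductions (a BFS ordering rooted at a low-degree vertex for the non-regular case, gluing block colourings at cut vertices, and, in the $2$-connected regular case, the key lemma producing nonadjacent $b,c \in N(a)$ with $G - \{b,c\}$ connected so that $b$ and $c$ can share a colour) are all carried out soundly, including the delicate end-block argument when $G$ is not $3$-connected. The paper states this classical theorem without proof, citing \cite{B41}, but your argument is essentially the same approach the paper itself adapts to digraphs in Section~\ref{sec:lovasz}, whose claims mirror your steps (Lemma~\ref{lem:reg} for non-regularity, $2$-connectedness, and the cutset-or-digon dichotomy replacing your ``nonadjacent pair whose deletion preserves connectivity'').
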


Many proofs of Brooks' Theorem have been found, and the different proofs generalize and extend in many directions. See~\cite{CR14} for a particularly nice survey on this subject. 
Brooks' Theorem has been generalised to digraphs via the notion of acyclic colouring. 
The aim of this paper is to give four new proofs of the directed version, each of them adapted from a proof of the undirected version. 
\medskip

The digraphs in this paper  have no loops or parallel arcs, but we allow cycle of length $2$ (\textit{digon}). A digraph is \textit{acyclic} if it contains no directed cycle. 
An \emph{acyclic colouring} (or \emph{dicolouring}) of a digraph $G$ is a colouring of $V(G)$ in such a way that no directed cycle is monochromatic. Equivalently, it is a partition of $G$ into acyclic induced subdigraph. 
The \emph{dichromatic number $\dic(G)$} of a digraph $G$ is the minimum number of colors in an acyclic colouring of $G$. 

The dichromatic number was first introduced by Neumann-Lara~\cite{NL82} in 1982 and was  rediscovered by Mohar~\cite{M03} 20 years later.
It is easy to see that for any undirected graph $G$, the \textit{symmetric digraph} $\olra G$ obtained from $G$ be replacing each edge by a digon satisfies $\chi(G) = \dic(\olra G)$. This simple fact permits to generalize results on the chromatic number of undirected graphs to digraphs via the dichromatic number. Such results have (recently) been found in various areas of graph colouring such as extremal graph theory~\cite{BBSS20, HK15, KS20}, algebraic graph theory~\cite{M10}, substructure forced by large dichromatic number~\cite{AAC21, ACN21, ACL19, hero,GSS20, HLNT19, S21}, list dichromatic number~\cite{BHL18, HM11}, dicolouring digraphs on surfaces~\cite{AHKR21, LM17, S19}, flow theory~\cite{H17, KV12}, links between dichromatic number and girth~\cite{HM12, S20}. 
\medskip


The maximum degree of a graph does not have a clear analogue for digraphs. 
We now introduce two ways to measure maximum degree in a digraph that make sense in the context of Brooks' Theorem.  
Let $v$ be a vertex of a digraph $G$. 
We define the \emph{maxdegree} of $v$ as $d_{max}(v)= \max(d^+(v), d^-(v))$ and the \emph{mindegree} of $v$ as $d_{min}(v)= \min(d^+(v), d^-(v))$.
We can then define the corresponding maximum degrees: $\Delta_{max}(G)= \max_{v \in V(G)}(d_{max}(v))$ and $\Delta_{min}(G)= \max_{v \in V(G)}(d_{min}(v))$. 
The following easily holds (see subsection~\ref{subsec:def} for a proof): for every digraph $G$, $\dic(G) \leq \dmin(G) +1 \leq \dmax(G) +1$. 
 

A \emph{symmetric cycle} (resp. \emph{symmetric complete graph}) is the digraph obtained from a cycle (resp. from a complete graph), by replacing each edge by a digon.

We are now ready to state the  directed version of Brooks' Theorem. 
It was first proved by Mohar in~\cite{M10}, but we discovered that the proof is incomplete, see Section~\ref{sec:lovasz} for more details. Anyway, in~\cite{HMGallai}, Harutyunyan and Mohar generalised Gallai's Theorem  (a strengthening of Brooks' Theorem for list colourings) to digraph, which gave an alternative and correct proof. 

\begin{theorem}[\cite{M10, HMGallai}]\label{brooks_max_oriented}
Let $G$ be a connected digraph, then $\dic(G) \leq \Delta_{max}(G) + 1$ and equality holds if and only if one of the following occurs:
\begin{itemize}
\item[(a)] $G$ is a directed cycle or,
\item[(b)] $G$ is a symmetric cycle of odd length or,
\item[(c)] $G$ is a symmetric complete graph on at least 4 vertices.
\end{itemize}
\end{theorem}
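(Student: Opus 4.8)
The plan is to prove the directed Brooks' theorem by adapting one of the classical undirected proofs to the acyclic-colouring setting. I would follow the greedy/ordering approach (the proof due to Lovász), which is the most natural for a degree bound. The key object is an ordering of the vertices so that each vertex, when coloured, has a "free" colour available, i.e.\ does not complete a monochromatic directed cycle in whatever colour we try to assign. The inequality $\dic(G) \le \dmax(G)+1$ is already granted by the excerpt, so the real content is the characterization of equality. The bulk of the work is therefore the harder direction: assuming $G$ is connected with $\dic(G) = \dmax(G)+1$, show that $G$ must be one of the three exceptional families (a)--(c).

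First I would dispose of the easy converse: one checks directly that each of the three families attains equality. For a directed cycle $C$, $\dmax(C)=1$, and $C$ is not acyclic, so it cannot be $1$-dicoloured, giving $\dic(C)=2=\dmax(C)+1$. For a symmetric complete graph $\olra{K_n}$ on $n\ge 4$ vertices, $\dmax=n-1$ and every pair of colour classes that shares two vertices creates a digon (a directed $2$-cycle), so each class has size $1$ and $\dic=n=\dmax+1$. For a symmetric odd cycle, $\dmax=2$ and this reduces via $\chi(C_{2k+1})=3$ through the identity $\chi(G)=\dic(\olra G)$ stated in the excerpt.

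For the hard direction, I would set $k = \dmax(G)$ and assume $\dic(G) \ge k+1$, aiming to show $G$ is exceptional. The strategy is to build a good vertex ordering $v_1,\dots,v_n$ ending at a distinguished vertex, colouring greedily so that whenever we reach a vertex all but possibly one colour is blocked, and to show that $G$ can be $k$-dicoloured unless a rigid local structure is present. Concretely, if $G$ is not $\dmax$-regular in the symmetric sense (there is a vertex $v$ with $d_{max}(v) < k$, or with $d^+(v)\neq d^-(v)$ so that some colour can always be reused in one direction), one can root a spanning structure at $v$ and colour from the leaves inward, always leaving a free colour and thus achieving a $k$-dicolouring; this forces every vertex to have $d^+(v)=d^-(v)=k$ and, crucially, to be \emph{symmetric} (every out-neighbour is also an in-neighbour), because an asymmetric arc gives slack that breaks a potential monochromatic cycle. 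Once $G$ is shown to be a symmetric $k$-regular digraph, the problem collapses onto its underlying undirected graph $H$: a directed cycle through two vertices of the same colour in a symmetric digraph is just a digon, so an acyclic colouring of $\olra H$ is exactly a proper colouring of $H$, giving $\dic(G) = \chi(H)$ and $\dmax(G) = \Delta(H)$. Applying the undirected Brooks' theorem to $H$ then yields that $H$ is an odd cycle or a complete graph, i.e.\ $G$ is of type (b) or (c); the directed-cycle case (a) is precisely the degenerate situation $k=1$ where $G$ is not symmetric.

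The main obstacle I anticipate is the rigidity step that forces full symmetry and $k$-regularity. In the undirected proof the analogous step exploits that a vertex of degree $< \Delta$ or a suitable pair of non-adjacent neighbours lets the greedy colouring save a colour; here the analogue must account for directed cycles rather than edges, so "blocking a colour" at a vertex requires a monochromatic directed path that can be closed into a cycle, not merely a monochromatic neighbour. Making the connectivity and ordering arguments robust to this distinction—in particular handling vertices where $d^+ \neq d^-$, and ruling out the low-degree exceptional cases ($k \le 2$) by hand—is where the care is needed. I would isolate this as a lemma asserting that a connected digraph with $\dic = \dmax+1$ and $\dmax \ge 3$ must be a symmetric complete graph, proved by the contrapositive ordering argument, and treat $\dmax \in \{1,2\}$ separately to capture families (a) and (b).
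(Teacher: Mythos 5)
Your skeleton---reduce to the $k$-regular case via a vertex of small mindegree and a reverse-BFS greedy colouring, then force a counterexample to be symmetric, then finish by applying the undirected Brooks' Theorem to the underlying graph---is exactly the skeleton of the paper's first (Lov\'asz-style) proof, and the easy parts of it are fine: if $d^+(v)\neq d^-(v)$ for some $v$ then $d_{min}(v)<\dmax(G)$ and the reverse-BFS greedy colouring already uses at most $\dmax(G)$ colours (this is the paper's Lemma~\ref{lem:reg}), and for a symmetric digraph $\dic(G)=\chi(\tilde G)$ and $\dmax(G)=\Delta(\tilde G)$, so undirected Brooks does give cases (b) and (c).

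The genuine gap is the rigidity step, which is the only hard point, and your proposal offers no mechanism for it. The assertion that ``an asymmetric arc gives slack that breaks a potential monochromatic cycle'' is not an argument: in a greedy dicolouring a vertex lacks a free colour only when both its in-neighbourhood and its out-neighbourhood are rainbow, so to exploit an arc $uv$ with $vu\notin A(G)$ you must manufacture a repeated colour on one side of some vertex. The standard move is to find a vertex $x$ with two out-neighbours (or two in-neighbours) $v_1,v_2$ not joined by a digon, precolour both with colour $1$, and colour the rest in reverse BFS order ending at $x$---but this requires $G-\{v_1,v_2\}$ to be connected, and that is precisely the point on which Mohar's original proof was incomplete, as the paper stresses. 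Supplying it is the bulk of the work: one proves that $\tilde G$ is $2$-connected, has no edge-cut of size $2$, that every $2$-vertex-cutset is a stable set, deduces the alternative ``two same-side neighbours either induce a digon or form a cutset,'' and then runs a block-decomposition analysis of $G-v$ (split into the cases $k=2$ and $k\geq 3$) to reach a contradiction. Your plan contains none of this; moreover, ``ruling out $k\le 2$ by hand'' is not a finite check---for $k=2$ one must exclude \emph{every} non-symmetric $2$-regular digraph of dichromatic number $3$, which in the paper needs the same cutset machinery. So the proposal names the right destination but omits the proof at its one essential step.
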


The next four sections are devoted to four new proofs of the directed Brooks' Theorem. 
In the last section, we show that  it is NP-complete to decide if $\dic(G) = \Delta_{min}(G) +1$, so a simple characterization of digraphs satisfying $\dic(G) = \dmin(G) +1$ is very unlikely. 


\subsection{Definitions and preliminaries} \label{subsec:def}
Let $G$ be a digraph and $v$ a vertex of $G$. 
We denote by $d_G^+(v)$ (resp.~$d_G^-(v)$) the number of out-neighbours (resp.~of in-neighboyrs) of $v$. We omit the subscript when $G$ is clear from the context.  
We denote by $N^+(v)$ (resp.~$N^-(v)$) the set of out-neighboyrs (resp.~in-neighbours) of $v$,  and by $N(v)$ the set of neighbours of $v$, that is $N(v) = N^+(v) \cup N^-(v)$. If $X$ is a set of vertices and $v \notin X$, $N_X(v) = N(v) \cap X$, $N^+_X(v)$ and $N^-_X(v)$ are defined similarly. We denote by $G[X]$ the subdigraph of $G$ induced by $X$. 
A digraph is \emph{$k$-regular} if for every vertex $v$, $d^+(v)=d^-(v)=k$. 
\medskip

We denote by $\mc B_{1}$ the set of directed cycles, $\mc B_{2}$ the set of symmetric odd cycles and, for $k \geq 3$, $\mc B_{k} = \{\olra K_{k+1}\}$ where $\olra K_{k+1}$ is the symmetric complete graph on $k + 1$ vertices.
Observe that the directed version of Brooks Theorem is equivalent to the following statement: \emph{A digraph $G$ has dichromatic number at most $\dmax(G) +1$ and equality occurs if and only if $G$ contains a connected component isomorphic to a member of $\mc B_{\dmax(G)}$}. 
We sometimes call the members of $\mc B_k$ \emph{exceptions}.
\medskip

Given a digraph $G$ and an ordering $(v_1, \dots, v_n)$ of its vertices, 
to \textit{colour greedily}  $G$ is to colour $v_1, \dots, v_n$ in this order by giving to $v_i$ the minimum between the smallest colour not used in $N^+(V) \cap \{v_1, \dots, v_{i-1}\}$ and the smallest colour not used in $N^-(V) \cap \{v_1, \dots, v_{i-1}\}$. 
It is easy to see that any ordering leads to an acyclic colouring with at most $\Delta_{min}(G)+1$ colours. And since we clearly have $\dmin(G) \leq \dmax(G)$, we have:
$$\dic(G) \leq \Delta_{min}(G) +1 \leq \dmax(G) +1$$

Given a digraph $G$, we define by $\tilde G$ its underlying graph and we say that $G$ is \textit{connected} if its underlying graph is connected. 
The following easy lemma will be used in the four proofs of the directed Brooks' Theorem. Note that it does not hold if one replaces $\dmax(G)$ by $\dmin(G)$, implicit examples are given in Section \ref{sec:dmin}. 

\begin{lemma}\label{lem:reg}
If $G$ is a connected non-regular digraph, then  $\dic(G) \leq \Delta_{max}(G)$.
\end{lemma}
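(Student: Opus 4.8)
The plan is to adapt the classical proof of the undirected statement: produce a vertex ordering for which the greedy dicolouring described above uses no more than $\Delta_{max}(G)$ colours. Write $\Delta = \Delta_{max}(G)$ and recall that every vertex $v$ satisfies $d^+(v) \le \Delta$ and $d^-(v) \le \Delta$. When greedy reaches $v_i$, let $a_i$ (resp.\ $b_i$) denote the number of already-coloured out-neighbours (resp.\ in-neighbours); the colour it receives is at most $\min(a_i, b_i) + 1 \le \Delta + 1$. So it suffices to guarantee, for every $v_i$, that $\min(a_i, b_i) \le \Delta - 1$, i.e.\ that at least one of $a_i, b_i$ falls strictly below $\Delta$.

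First I would locate a good last vertex. I claim non-regularity yields a vertex $v_n$ with $d_{min}(v_n) \le \Delta - 1$: indeed, if every vertex had $d_{min}(v) \ge \Delta$, then since $d_{min}(v) \le d_{max}(v) \le \Delta$ we would get $d^+(v) = d^-(v) = \Delta$ for all $v$, i.e.\ $G$ would be $\Delta$-regular, contradicting the hypothesis. This $v_n$ will be coloured last; since $d^+(v_n) < \Delta$ or $d^-(v_n) < \Delta$, one of $a_n, b_n$ is automatically $< \Delta$, so $v_n$ gets a colour in $\{1, \dots, \Delta\}$.

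Next I would order the remaining vertices so that every vertex other than $v_n$ has at least one neighbour placed later. Because $\tilde G$ is connected, I can take a spanning tree rooted at $v_n$ and list the vertices so that each vertex precedes its parent; concretely, reverse a BFS order started at $v_n$. Then for $i < n$ the vertex $v_i$ has a later neighbour $w$: if $w \in N^+(v_i)$ then $a_i \le d^+(v_i) - 1 \le \Delta - 1$, and if $w \in N^-(v_i)$ then $b_i \le d^-(v_i) - 1 \le \Delta - 1$. Either way $\min(a_i, b_i) \le \Delta - 1$, so $v_i$ too receives a colour at most $\Delta$.

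Combining the two cases, every vertex receives a colour at most $\Delta$, and since any greedy ordering produces an acyclic colouring (as noted above), we obtain a dicolouring with at most $\Delta_{max}(G)$ colours. The only genuinely directed point — and the step I would be most careful about — is the choice of last vertex: the quantity to control there is $d_{min}$, not $d_{max}$, and it is precisely non-regularity (rather than, say, the mere existence of two distinct degrees) that guarantees a vertex with $d_{min} < \Delta_{max}$.
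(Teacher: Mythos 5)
Your proof is correct and takes essentially the same route as the paper: the paper's proof likewise uses non-regularity to pick a vertex $u_1$ with $d_{min}(u_1) < \Delta_{max}(G)$ and then greedily dicolours along the reverse of a BFS ordering of $\tilde G$ rooted at $u_1$. You merely spell out the details the paper leaves implicit (why each non-root vertex, having an uncoloured neighbour, gets a colour in $\{1,\dots,\Delta_{max}(G)\}$, and why the root does too).
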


\begin{proof}
Since $G$ is non-regular, it has a vertex $u_1$ such that $d_{min}(u_1) < \Delta_{max}(G)$. Let $u_1, \dots, u_n$ be a vertex ordering output by a BFS on $\tilde G$ starting at $u_1$. By greedily colouring $G$ with respect to the ordering $u_n, \dots, u_1$, we get a colouring with at most  $\Delta_{max}(G)$ colours.
\end{proof}

If $\Delta_{max}(G) = 1$, then every vertex has at most one in-neighbour and at most one out-neighbour so $G$ is a directed cycle or a path. Hence, $\dic(G) = 2$ if and only if $G$ is a directed cycle. This proves Theorem \ref{brooks_max_oriented} for $\Delta_{max}(G) = 1$. So we only need to prove the directed Brooks' Theorem for digraphs with $\dmax(G) \geq 2$,  and we have the base case when we want to proceed by induction on the value of $\dmax(G)$. 


\section{Lov\'asz' proof: greedy colouring}\label{sec:lovasz}

In this section, we adapt the proof of Brooks' Theorem given by Lov\'asz in~\cite{L75}. 
The idea is the following: when we greedily colour the vertices of a connected digraph $G$ using the reverse order output by a BFS of $\tilde G$, each vertex except (possibly) the last one receives a colour from $\{1, \dots, \Delta_{max}(G)\}$. Indeed, the fact that $G$ is connected ensures that each vertex (except possibly the last one) has at most $\dmax(G) -1$ in-neighbours or out-neighbours already coloured. 
The goal of the proof is then to find an ordering of the vertices such that the last vertex can also be coloured with a colour from $\{1, \dots, \Delta_{max}(G)\}$. 

The first version of the directed Brooks' Theorem appeared in~\cite{M10} and the  given proof is based on Lov\'asz' idea, but appears to be incomplete.  
To explain why, let us dive a little deeper into the proof. 
The goal is to find a vertex $v$ that has two out- (or two in-) neighbours $v_1$, $v_2$ such that $v_1$ and $v_2$ are not linked by a digon and such that $G \sm \{v_1, v_2\}$ is connected. 
You can then choose an ordering of the vertices that starts with $v_1$ and $v_2$ and continue with the reverse order output by a BFS of $\tilde G$ starting at $v$ (so the ordering ends with $v$). A greedy colouring give colour $1$ to $v_1$ and $v_2$, and thus there will be an available colour from $\{1, \dots, \dmax(G)\}$ to colour $v$ (the last vertex of the ordering). 
In~\cite{M10}, a vertex $v$ with two in- or two out-neighbours $v_1$ and $v_2$ not linked by a digon is found, but the fact that $G \sm \{v_1, v_2\}$ is connected is not checked, and reveals to be non-trivial to prove.  
We now give a full proof based on this idea.

\begin{theorem}
A connected digraph $G$ has dichromatic number at most $\dmax(G) +1$ and equality occurs if and only it is a member of $\mc B_{\dmax(G)}$. 
\end{theorem}

\begin{proof}
Let $G$ be a counter-example, that is $G$ is connected, $\dic(G) = \Delta_{max}(G) +1$ and $G$ is not a member of $\mc B_{\dmax(G)}$. Set $k = \Delta_{max}(G) \geq 2$ and recall that $\tilde{G}$ denotes the underlying graph of $G$.
By Lemma~\ref{lem:reg}, $G$ is $k$-regular.

\begin{claim}\label{claim:2con}
 $\tilde{G}$ is $2$-connected
\end{claim}

\begin{proofclaim}
Assume for contradiction that $\tilde G$ has a cutvertex $u$ and let $C_1$ be a connected component of $G - u$, and $C_2$ the union of the other connected components.  
Set $G_{i} = G[C_{i} \cup \{u\}]$ for $i=1,2$. By Lemma~\ref{lem:reg}, $G_1$ and $G_2$ are $k$-dicolourable. 
Up to permuting colours, we may assume that the $k$-dicolourings of $G_1$ and $G_2$ agree on $u$, which give a $k$-dicolouring of $G$, a contradiction. 
\end{proofclaim}

\begin{claim}\label{claim:2edgeCut}
  $\tilde G$ has no edge-cut of size $2$.
\end{claim}

\begin{proofclaim} 
Assume by contradiction that $G$ has an edge cutset $\{e_1, e_2\}$. 
Let $G_{1}$ and $G_2$ be the two connected components of $G - \{e_1, e_2\}$. 
Both $G_1$ and $G_2$ are $k$-colourable by Lemma \ref{lem:reg}. 
A $k$-coloring of $G_1$ and $G_2$ give a $k$-colouring of $G$ as soon as the extremities of $e_1$ and $e_2$ use at least two distinct colours. Permuting colours in $G_1$ if necessary, we get a $k$-colouring of $G$. 
\end{proofclaim}

\begin{claim}\label{claim:2cut}
If $\{u,v\} \subseteq V(G)$ is a cutset of $\tilde{G}$, then $\{u,v\}$ is a stable set. 
\end{claim}

\begin{proofclaim}
Let $\{u,v\} \subseteq V(G)$ be a cutset of $\tilde{G}$ and assume for contradiction and without loss of generality, that $uv$ is an arc of $G$. 
Let $C_{1}$ be a connected component of $\tilde G \sm \{u,v\}$ and $C_2$ the union of the other connected components.  Set $G_{i} = G[C_{i} \cup \{u, v\}]$ for $i=1,2$.

Since $\tilde{G}$ is $2$-connected, both $u$ and $v$ have some neighbours in both $C_1$ and $C_2$ and thus $G_1$ and $G_2$ are $k$-dicolourable by Lemma~\ref{lem:reg}.
 If both $G_1$ and $G_2$ admit a $k$-dicolouring in which $u$ and $v$ receive distinct (resp. same) colours, then we get a $k$-dicolouring of $G$, a contradiction (because no induced cycle can intersect both $C_1$ and $C_2$). 
 So we may assume without loss of generality that $u$ and $v$ receive the same colour (resp. distinct colours) in every $k$-dicolouring of $G_1$ (resp. in every $k$-dicolouring of $G_2$). 

If $u$ has an out-neighbour in $C_2$, then $d_{G_1}^+(u) \leq k-1$. We can $k$-dicolour $G_1 - \{u\}$, and extend the $k$-dicolouring to $u$ with a colour not appearing in the out-neighbourhood of $u$, so in particular distinct from the colour of $v$, a contradiction.
So $u$ has no out-neighbour in $C_2$ and similarly, $v$ has no in-neighbour in $C_2$. 

Suppose $u$ has in-degree at least $2$ in $G_{2}$. Then $d_{G_1}^-(u) \leq k-2$ and thus we can $k$-dicolour $G_1 - \{u\}$ and extend this dicolouring to $G_1$ by giving to $u$ a colour not used in its in-neighbour and distinct from $v$, a contradiction. 
So $u$ has exactly one in-neighbour in $G_2$, and similarly $v$ has exactly one out-neighbour in $G_2$ which gives us an edge cutset of size $2$, a contradiction with (\ref{claim:2edgeCut}).
\end{proofclaim}

\begin{claim}\label{claim:lovasz}
Let $x$ be a vertex of $G$ and $u$ and $v$ two out-neighbours of $x$. Then either $\{u,v\}$ induces a digon, or $\{u,v\}$ is a cutset. Same holds if $u$ and $v$ are in-neighbours of $x$. 
\end{claim}

\begin{proofclaim}
Assume for contradiction that $\{u, v\}$ does not induce a digon and is not a cutset of $G$. 
Let $G' = G - \{u, v\}$ and $\tilde{G}'$  the underlying graph of $G'$. Since $\tilde{G}'$ is connected, there is a BFS ordering $(x = u_{1}, u_{2}, \dots, u_{n-2})$ of $\tilde{G}'$. Set $u_{n-1} = u$ and $u_{n} = v$. 
We now greedily dicolour $G'$ with respect to the order $(u_n, u_{n-1}, \dots, u_1)$.  Since $G[\{u_{n}, u_{n-1}\}]$ is not a digon, $u_{n}$ and  $u_{n-1}$ both receive colour $1$. For $i=n-2, \dots 2$, $u_i$ has at least one neighbour in $G[\{u_1, \dots u_{i-1}\}]$, and thus $u_i$ has at most $k-1$ in- or out-neighbours in $G[u_n, \dots, u_i]$ and hence we can assign a colour from $\{1, \dots, k\}$ to it.  
Finally, since $u_n$ and $u_{n-1}$ receive colour $1$ and are both in the out-neighbourhood of $u_1$, the out-neighbourhood of $u_1$ is coloured with at most $k-1$ distinct colours and thus $u_1$ receive a colour from $\{1, \dots, k\}$, a contradiction. 
The proof is the same when $u$ and $v$ are in-neighbours of $x$.
\end{proofclaim}
\medskip

Observe that $G$ cannot be a symmetric digraph because of the undirected Brook's Theorem. So there exists $u, v \in V(G)$ such that $uv \in A(G)$ and $vu \notin A(G)$. 
By (\ref{claim:2cut}), $\{u, v\}$ is not a cutset.

\begin{claim}\label{neighboursofuarecut}
For every $a \in N^+(u) \setminus \{v\}$, $\{a, v\}$ is a cutset. 
\end{claim}

\begin{proofclaim}
Suppose $\{a, v\}$ is not a cutset. By (\ref{claim:lovasz}) $\{a,v\}$ induces a digon and thus $u$ and $v$ are in-neighbours of $a$. But $\{u,v\}$ is not a cutset by (\ref{claim:2cut}) and does not induce a digon, a contradiction to (\ref{claim:lovasz}).
\end{proofclaim}
\medskip

Let $H = G - v$ and let $a \in N^+(u) \setminus \{v\}$. 
By (\ref{neighboursofuarecut}) $a$ is a cutvertex of $H$, so $H$ has at least two blocks (where a \textit{block} is a maximal 2-connected subgraph of $\tilde G$).
Since $\tilde G$ is $2$-connected, $v$ has a neighbour in each leaf block of the block decomposition of $\tilde H$. 

We now break the proof into two parts with respect to the value of $k$. 
Suppose first that $k = 2$. 
If the two out-neighbours (resp. the two in-neighbours) of $v$ belong to distinct blocks of $\tilde H$, then $N^+(v)$ does not induce a digon, nor a cutset of $G$, a contradiction to~(\ref{claim:lovasz}). 
Hence $N^+(v)$ is included in a leaf block of $H$ and  $N^-(u)$ in another one.
Now, dicolour $H$ with $2$ colours (it is possible by Lemma~\ref{lem:reg}). 
Let $w$ be a cutvertex of $H$ separating the leaf blocks containing the neighbours of $v$. Observe that every cycle containing $v$ must go through $w$. Hence we can extend the $2$-dicolouring of $H$ by giving to $v$ a colour distinct from the one received by $w$  to get a $2$-dicolouring of $G$, a contradiction.

Assume now that $k \geq 3$. So there exists $b \in N^+(u) \setminus \{a,v\}$. By (\ref{neighboursofuarecut}), both $a$ and $b$ are cutvertices of $H$. Since $uv \in A(G)$,  $u$ is not a cutvertex of $H$ by (\ref{claim:2cut}). 
Let $U$ be the block of $H$ containing $u$ (which is unique because $u$ is not a cutvertex of $H$). Since $u$ sees both $a$ and $b$, $U$ is not a leaf block of $H$. 
Let $U_{1}$ and $U_{2}$ be two distinct leaf blocks of $H$. Since $\tilde{G}$ is $2$-connected, $v$ must have neighbours in $U_1$ and $U_2$. Let $u_1 \in U_1$ and $u_2 \in U_2$ be two neighbours of $v$. 
So $u$, $u_1$, $u_2$ are in pairwise distinct blocks of $H$ which implies that for every $\{x,y\} \subseteq \{u,u_1,u_2\}$, $\{x,y\}$ does not induced a digon and is not a cutset of $\tilde{G}$. 
Now, since $u$, $u_1$, $u_2$ are neighbours of $v$, two of them are included in the in-neighbourhood or in the out-neighbourhood of $v$, a contradiction to (\ref{claim:lovasz}).
\end{proof}


\section{Acyclic subdigraph and induction}

The proof of this section is an adaptation of a proof of Rabern~\cite{R14}, see also Section 3 of \cite{CR14}. 
Here is a sketch of the proof. 
Let $G$ be a digraph with $\dmax(G) = k$. We do an induction on $k$. 
We first choose a maximal induced acyclic subdigraph $M$ of $G$ and prove that $G-M$ must have dichromatic number $k-1$ and thus must contain a connected component $T$ isomorphic to a member of $\mc B_{k-1}$ by induction. We then show that a $k$-dicolouring of $G-T$ can be extended to $G$.

\begin{theorem}
Let $G$ a digraph such that $\dic(G) = \Delta_{max}(G) + 1$. Then $G$ contains a connected component isomorphic to a member of $\mc B_{\Delta_{max}(G)}$. 
\end{theorem}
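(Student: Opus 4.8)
The plan is to follow the sketch given in the excerpt: induct on $k = \Delta_{max}(G)$, using the case $k=1$ (and the reduction to $\Delta_{max}(G) \geq 2$) established earlier as the base case. Assume $\dic(G) = k+1$. By Lemma~\ref{lem:reg}, $G$ is $k$-regular; it therefore suffices to show $G$ itself is a member of $\mathcal{B}_k$ (a $k$-regular exception), since a component isomorphic to a member of $\mathcal{B}_{\Delta_{max}(G)}$ is what we want. First I would pick a \emph{maximal} induced acyclic subdigraph $M$ of $G$. Maximality is the engine of the argument: for every vertex $w \in V(G) \setminus M$, adding $w$ to $M$ creates a monochromatic directed cycle through $w$ in $G[M \cup \{w\}]$, which forces $w$ to have both an in-neighbour and an out-neighbour in $M$ (in fact a directed path structure), so $d^+_M(w) \geq 1$ and $d^-_M(w) \geq 1$.

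The next step is to understand $H := G - M$. Since $G$ is $k$-regular and each vertex of $H$ sends at least one arc in each direction into $M$, every vertex $w$ of $H$ satisfies $d^+_H(w) \leq k-1$ and $d^-_H(w) \leq k-1$, so $\Delta_{max}(H) \leq k-1$. A greedy argument (as in the preliminaries) then gives $\dic(H) \leq k$. If in fact $\dic(H) \leq k-1$, I would combine a $(k-1)$-dicolouring of $H$ with the single colour $k$ on the acyclic set $M$ to obtain a $k$-dicolouring of $G$, contradicting $\dic(G) = k+1$. Hence $\dic(H) = k$, so $\Delta_{max}(H) = k-1$ exactly, and by the induction hypothesis $H$ contains a connected component $T$ isomorphic to a member of $\mathcal{B}_{k-1}$. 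Note $T$ is $(k-1)$-regular inside $H$, so every vertex of $T$ has exactly one in-neighbour and one out-neighbour in $M$ (by $k$-regularity of $G$, the inequalities above are tight for vertices of $T$).

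The crux of the proof is to extend a $k$-dicolouring of $G - T$ to all of $G$, which would contradict $\dic(G)=k+1$ unless the extension is obstructed in a very rigid way. The plan: take a $k$-dicolouring of $G-T$ (available since $G-T$ is a proper induced subdigraph — I would verify $\dic(G-T) \leq k$, e.g.\ via Lemma~\ref{lem:reg} applied to its components, which are non-regular because deleting $T$ strictly lowered some degrees in $G$, or directly by the colouring of $M$ together with $H - T$). I then try to colour the vertices of $T$ one by one. Since $T$ is $(k-1)$-regular, each vertex of $T$ has $k-1$ neighbours inside $T$ plus its one in-neighbour and one out-neighbour in $M$; the key observation is that an uncoloured vertex $t \in T$ fails to receive a colour only if its $k$ forbidden in-directions (or out-directions) all carry distinct colours, which pins down the local colour pattern completely.

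The main obstacle — and where I expect the real work to lie — is showing that the only way this extension can fail for \emph{every} choice of $k$-dicolouring of $G-T$ is that $T$ is actually a full connected component of $G$, i.e.\ $M$ sends no arcs to $T$ after all, so that $T$ itself is the desired exceptional component of $G$. Concretely, I would argue that if $T$ is \emph{not} a component, then some vertex of $T$ has slack (its in- or out-neighbourhood in $G$ repeats a colour or misses one of $\{1,\dots,k\}$), and I would order the vertices of $T$ to colour the slack vertex last, using the rigid structure of symmetric complete graphs, odd symmetric cycles, and directed cycles to propagate a valid colouring. Handling the three types in $\mathcal{B}_{k-1}$ separately (the directed-cycle case $k-1=1$, the odd-symmetric-cycle case $k-1=2$, and the symmetric-complete case $k-1 \geq 3$) is the delicate part, since each has different automorphism flexibility; the payoff is that whenever $T$ is attached to $M$ this flexibility suffices to recolour and extend, forcing the contradiction, and leaving $T$ as an isolated exceptional component as the only surviving possibility.
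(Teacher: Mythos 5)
Your first half matches the paper's proof exactly: minimal counterexample, $k$-regularity via Lemma~\ref{lem:reg}, a maximal acyclic induced subdigraph $M$, the degree count giving $\dmax(G-M)\leq k-1$, the observation that $\dic(G-M)=k$ (else colour $M$ with a $k$-th colour), and the inductive extraction of a component $T$ of $G-M$ isomorphic to a member of $\mc B_{k-1}$, each of whose vertices has exactly one in- and one out-neighbour outside $T$. But the crux is exactly where your proposal stops being a proof, and there are two genuine gaps there. First, you propose to take an \emph{arbitrary} $k$-dicolouring of $G-T$ and extend it by recolouring/ordering tricks inside $T$; this cannot work, because the extension can genuinely fail for a bad colouring of $G-T$. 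For instance, if each vertex of $T=\olra K_k$ is joined by a digon to an outside vertex and all these outside vertices receive the same colour, then all $k$ vertices of $T$ have the identical list of $k-1$ available colours, and no ordering or automorphism of $\olra K_k$ rescues that. The paper's key idea, which your proposal lacks, is to \emph{modify} $G-T$ before colouring it: add the arc $xy$ between the in-neighbour $x$ and out-neighbour $y$ of some $u\in T$ (or add a digon between two outside digon-neighbours $x,y$ of $T$), and $k$-dicolour the resulting digraph $H$, so that the colouring of $G-T$ is guaranteed a useful property (no monochromatic path from $y$ to $x$, resp.\ $x$ and $y$ coloured differently) that makes the extension to $T$ go through. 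Making this legitimate requires machinery you do not have: since $\dmax(H)$ can still equal $k$, induction on $k$ alone does not apply to $H$; the paper uses a counterexample minimal in the number of vertices, plus two technical claims proved beforehand (that $G$ contains no member of $\mc B_k$ minus an arc or minus a digon) to certify that $H$ has no $\mc B_k$-component and hence is $k$-dicolourable.

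Second, your proposed endgame is wrong as stated: you hope to conclude that the only obstruction is ``$T$ is actually a full connected component of $G$, i.e.\ $M$ sends no arcs to $T$ after all.'' That possibility is excluded from the outset by the maximality of $M$: every vertex of $G-M$, in particular every vertex of $T$, has both an in-neighbour and an out-neighbour in $M$, so $T$ is never a component of $G$. The argument must instead terminate in an outright contradiction showing the minimal counterexample does not exist; in the paper the all-digons case funnels into ``some outside vertex $x$ is linked by digons to all of $T$, whence $T\cup\{x\}$ induces $\olra K_{k+1}$ and $G=\olra K_{k+1}$ by $k$-regularity,'' contradicting the assumption that $G$ is not in $\mc B_k$, while the remaining cases are killed by the arc/digon-addition device above. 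So the skeleton of your proposal is right, but the part you flag as ``where the real work lies'' is not a routine propagation argument: it needs the added-arc trick, the vertex-minimality, and the two exception-minus-an-arc/digon claims, none of which appear in your plan.
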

\begin{proof}
The theorem is true for digraphs $G$ with $\dmax(G) = 1$. Let $k \geq 2$ and assume the theorem holds for digraph with maximum maxdegree at most $k-1$. By mean of contradiction, assume there exists a digraph $G$ with $\dmax(G) = k$ violating the theorem. 
We choose such a $G$ with minimum number of vertices. 
By Lemma~\ref{lem:reg}, $G$ is $k$-regular. 

We now prove two technical claims. 

\begin{claim}
If $k \geq 3$, $G$ cannot contain $\olra{K}_{k+1}$ less an arc, or less a digon, as an induced subdigraph. 
\end{claim}

\begin{proofclaim}
Suppose $G$ contains a subdigraph $K$ isomorphic to $\olra{K}_{k + 1}$ less a digon $\{uv,vu\}$.  
Observe that $u$ and $v$ both have exactly one in-neighbour and one out-neighbour outside of $K$, and that all other vertices of $K$ have no neighbour outside of $K$. Now, by Lemma~\ref{lem:reg}, $G - K$ can be $k$-dicoloured and we can extend this $k$-dicolouring to $G$ as follows: at most one colour is forbidden for $u$ and one for $v$, hence, since $k \geq 3$, we can give the same colour to $u$ and $v$, and then assign the $k-1$ remaining colours to $V(K)\sm \{u,v\}$. We thus get a $k$-dicolouring of $G$, a contradiction.  The same reasoning holds when an arc is missing instead of a digon. 
\end{proofclaim}

\begin{claim}
If $k = 2$, $G$ cannot contain a symmetric odd cycle less an arc, or less a digon, as an induced subdigraph. 
\end{claim}

\begin{proofclaim}

Let $\ell \geq 1$.
Assume for contradiction that $G$ contains a subdigraph $C$ isomorphic to $\overleftrightarrow{C}_{2\ell + 1}$ less an arc $uv$. Let us consider a $2$-dicolouring of $G - C$ and assume without loss of generality that the out-neighbour of $u$ not in $C$ is coloured $1$. We can colour $u$ and $v$ with colour $2$, and greedily dicolour $C - u - v$ to obtain a $2$-dicolouring of $G$, a contradiction.

Suppose now that $G$ contains a subdigraph $C$ isomorphic to $\overleftrightarrow{C}_{2l + 1}$ less a digon $\{uv,vu\}$. 
Let us name $F = (G - (C - \{u,v\})) / uv$. Either $F$ is $2$-dicolourable, in which case there exists a $2$-dicolouring of $G - \{C-\{u,v\}\}$ in which $u$ and $v$ receive the same colour and we can extend this dicolouring to $C$ or, as $\Delta_{max}(F) \leq 2$ and $|V(F)| < |V(G)|$, $F$ is a symmetric odd cycle, which implies $G$ is a symmetric odd cycle as well, a contradiction.
\end{proofclaim}

Let $M$ be a maximal directed acyclic subdigraph of $G$. By maximality of $M$, every vertex in $G-M$ must have at least one in-neighbour and one out-neighbour in $M$, so $\Delta_{max}(G - M) \leq k-1$. 
Moreover, $\dic(G - M) = k$,  as otherwise we could $(k - 1)$-dicolour $G - M$ and use a $k^{th}$ colour for $M$. So $G-M$ has a connected component $T$ isomorphic to a member of $\mathcal B_{k-1}$ by induction. 

Suppose first that there exists $u \in V(T)$ whose in-neighbour $x$ and out-neighbour $y$ in $G - T$ are distinct. 
Let $H = G - T$ to which is added the arc $xy$ if   $xy \notin A(G)$. 
Observe that $\Delta_{max}(H) \leq k$. 
Then $H$ does not contain any element of $\mathcal B_{k}$ (as $G$ does not contain an element of $\mathcal B_{k}$ less an arc) which, by minimality of $G$, implies that $H$ is $k$-dicolourable. Thus there is a $k$-dicolouring of $G - T$ with no monochromatic path from $y$ to $x$.

We are now going to show that such a dicolouring can be extended to $T$. We break the proof into two parts with respect to the value of $k$. 

Assume first that $k \geq 3$. Then $T$ induces $\olra K_{k}$. 
Observe that each vertex of $T$ has precisely one in-neighbour and one out-neighbour outside of $T$. So we can greedily extend the $k$-dicolouring of $G-T$ to $G-u$. We can now greedily extend this dicolouring to $u$. This is possible because there is no monochromatic path from $y$ to $x$ in $G-T$. 


Assume now that $k=2$. Then $T$ induces a directed cycle. 
If $\cup_{v \in T} N(v) \setminus V(T)$  is monochromatic of colour $c$, we can assign colour $c$ to $u$ and the other colour to vertices of $T-\{u\}$ to obtain a proper $2$-dicolouring of $G$. If not,  there must exist a vertex $z$ in $T$ such that, naming $z'$ its out-neighbour in $T$, $N^{+}(z') \cup N^{-}(z') \cup N^{+}(z) \setminus V(T)$ is not monochromatic. Let $c$ be the colour of the out-neighbour of $z$ not in $T$. We can then safely assign colour $c$ to $z'$ and then greedily extend the dicolouring to $T \setminus \{z\}$.  Now, since the two out-neighbours of $z$ are coloured $c$, we can safely assign the other colour to $z$ to obtain a proper $2$-dicolouring of $G$. 
\medskip

We can now assume that each vertex $u$ of $T$ is linked to $G - T$ via a digon. 
If there is a vertex $x$ in $G-T$ linked to all vertices of $T$, then $T$ has at most $k$ vertices and thus must be isomorphic to $\olra K_{k}$. Hence $T \cup \{x\}$ induces $\olra K_{k+1}$, a contradiction. 

So, there exist two distinct vertices $x,y$ in $G-T$ linked via a digon to two (distinct) vertices of $T$. 
Let $H = G - T$ to which is added arcs $xy$ and $yx$ (if not existing). Then $H$ does not contain any element of $\mathcal B_{k}$ (as $G$ does not contain an element of $\mc B_{k}$ less a digon or an arc) and thus, by minimality of $G$, $H$ is $k$-dicolourable. 
Thus, $G - T$ admits a $k$-dicolouring in which $x$ and $y$ receive distinct colours. We can easily extend this $k$-dicolouring to a $k$-dicolouring of $G$ since each vertex of $T$ has a set of $k-1$ available colours and some pair  of vertices  in $T$ (the neighbours of $x$ and $y$) get distinct sets.
\end{proof}


\section{$k$-trees}

The proof presented in this section is an adaptation of a proof of Tverberg~\cite{T83}, see also section 4 of \cite{CR14}. 

A digraph $G$ is a \emph{direct composition} of digraphs $G_{1}$ and $G_{2}$ on vertices $v_{1} \in V(G_{1})$ and $v_{2} \in V(G_{2})$ if it can be obtained from the disjoint union of $G_1$ and $G_2$ by adding exactly one arc between $v_1$ and $v_2$ (either $v_1v_2$ or $v_2v_1$). 
A digraph $G$ is a \emph{cyclic composition} of digraphs $G_{1}, \dots, G_{\ell}$ ($\ell \geq 2$) on vertices $v_{1} \in V(G_{1}), \dots, v_{\ell} \in V(G_{\ell})$ if it can be obtained from the disjoint union of the $G_{i}$ by adding  the arcs $v_iv_{i+1}$ for $i=1, \dots, \ell- 1$ and $v_{\ell}v_1$

A digraph $G$ is a \emph{$k$-tree} if $\dmax(G) \leq k$ and it can be constructed as follows:
\begin{itemize}
    \item the digraphs in $\mathcal B_{k-1}$ are $k$-trees;
    \item a  direct or cyclic composition of $k$-trees is a $k$-tree;
\end{itemize}


Let $G$ be a digraph. 
A \emph{direct $k$-leaf} of $G$ is an induced subdigraph $T$ of $G$ such that $T$ belongs to $\mathcal B_{k-1}$ and $G$ is a direct composition of $T$ and $G-T$. 
If $G$ cannot be obtained from a cyclic composition of members of $\mathcal B_{k-1}$, an induced subdigraph $T$ of $G$ is a  \emph{cyclic $k$-leaf} of $G$ if $T$ can be obtained from $\ell \geq 1$ disjoint  $T_1, \dots, T_{\ell}$ belonging to $\mathcal B_{k-1}$ by adding $\ell -1 $ arcs $v_iv_{i+1}$ for $i=1, \dots, \ell-1$ where $v_i \in V(T_i)$, and $G$ is a cyclic composition of $G - T$ and $T_1, \dots, T_{\ell}$. See Figure~\ref{fig_tree}. 

A \emph{$k$-leaf} of $G$ is either a direct $k$-leaf or a cyclic $k$-leaf of $G$, or $G$ itself if $G$ is a member of $\mathcal B_{k-1}$ or $G$ is obtained from a cyclic composition of members of $\mathcal B_{k-1}$.
Observe that two distinct $k$-leaves of a digraph $G$ are always vertex disjoint and that a $k$-tree has at least two $k$-leaves except if it is a member of $\mathcal B_{k-1}$ or if  it can be obtained by a cyclic composition of members of $\mathcal B_{k-1}$. 

A \emph{$k$-path} is a digraph obtained by taking the disjoint union of $l \geq 2$ members $T_1, \dots, T_{\ell}$ of $\mathcal B_{k-1}$ and adding arcs $v_iv_{i+1}$ for $i=1, \dots, \ell-1$ where $v_i \in V(T_i)$.


    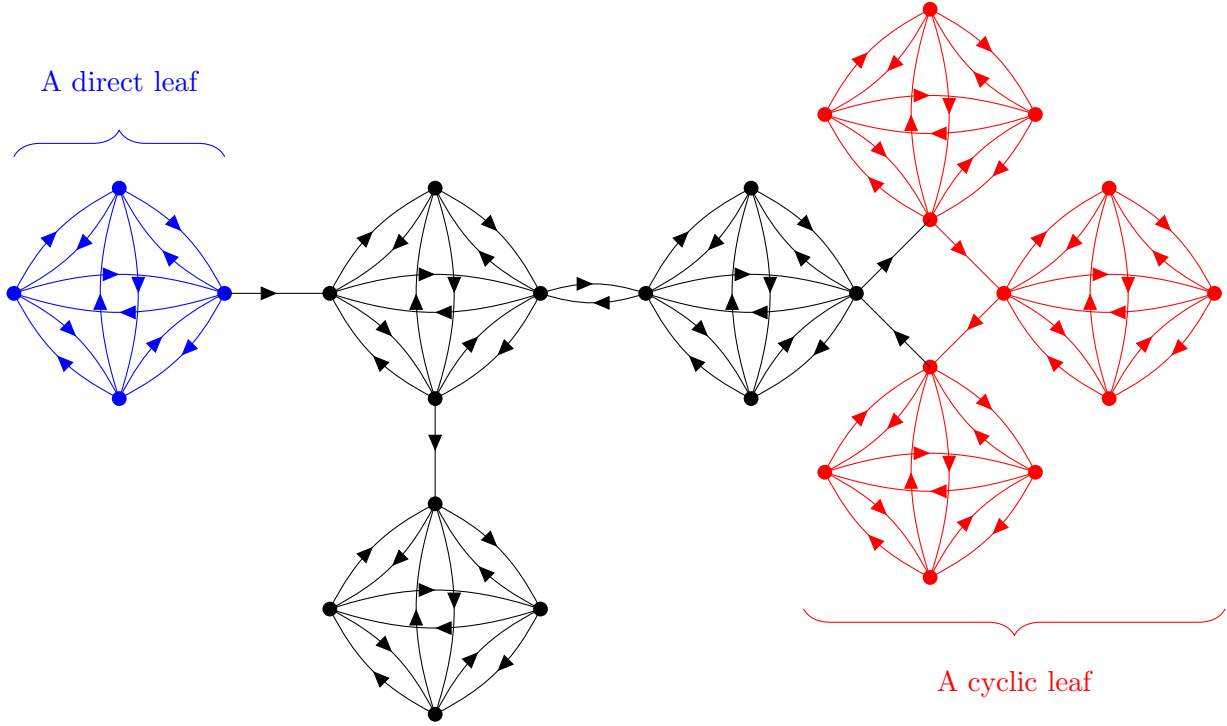
\begin{figure}[ht]
    
    \centering
    
	\begin{tikzpicture}[line cap=round,line join=round,>=triangle 45, scale=1.4] 
	
        \begin{scope}[xshift=0cm,yshift=0cm,scale=1]
            \draw[blue] [->-] (-1,0) to[bend left = 18] (0,-1);
            \draw[blue] [->-] (0,-1) to[bend left = 18] (-1,0);
            \draw[blue] [->-] (-1,0) to[bend left = 18] (0,1);
            \draw[blue] [->-] (0,1) to[bend left = 18] (-1,0);
            \draw[blue] [->-] (-1,0) to[bend left = 18] (1,0);
            \draw[blue] [->-] (1,0) to[bend left = 18] (-1,0);
            \draw[blue] [->-] (0,-1) to[bend left = 18] (0,1);
            \draw[blue] [->-] (0,1) to[bend left = 18] (0,-1);
            \draw[blue] [->-] (0,-1) to[bend left = 18] (1,0);
            \draw[blue] [->-] (1,0) to[bend left = 18] (0,-1);
            \draw[blue] [->-] (0,1) to[bend left = 18] (1,0);
            \draw[blue] [->-] (1,0) to[bend left = 18] (0,1);
            \fill[blue] (-1,0) circle (2pt);
            \fill[blue] (0,-1) circle (2pt);
            \fill[blue] (0,1) circle (2pt);
            \fill[blue] (1,0) circle (2pt);
        \end{scope}

        \begin{scope}[xshift=3cm,yshift=0cm,scale=1]
            \draw [->-] (-1,0) to[bend left = 18] (0,-1);
            \draw [->-] (0,-1) to[bend left = 18] (-1,0);
            \draw [->-] (-1,0) to[bend left = 18] (0,1);
            \draw [->-] (0,1) to[bend left = 18] (-1,0);
            \draw [->-] (-1,0) to[bend left = 18] (1,0);
            \draw [->-] (1,0) to[bend left = 18] (-1,0);
            \draw [->-] (0,-1) to[bend left = 18] (0,1);
            \draw [->-] (0,1) to[bend left = 18] (0,-1);
            \draw [->-] (0,-1) to[bend left = 18] (1,0);
            \draw [->-] (1,0) to[bend left = 18] (0,-1);
            \draw [->-] (0,1) to[bend left = 18] (1,0);
            \draw [->-] (1,0) to[bend left = 18] (0,1);
            \fill (-1,0) circle (2pt);
            \fill (0,-1) circle (2pt);
            \fill (0,1) circle (2pt);
            \fill (1,0) circle (2pt);
        \end{scope}

        \begin{scope}[xshift=3cm,yshift=-3cm,scale=1]
            \draw [->-] (-1,0) to[bend left = 18] (0,-1);
            \draw [->-] (0,-1) to[bend left = 18] (-1,0);
            \draw [->-] (-1,0) to[bend left = 18] (0,1);
            \draw [->-] (0,1) to[bend left = 18] (-1,0);
            \draw [->-] (-1,0) to[bend left = 18] (1,0);
            \draw [->-] (1,0) to[bend left = 18] (-1,0);
            \draw [->-] (0,-1) to[bend left = 18] (0,1);
            \draw [->-] (0,1) to[bend left = 18] (0,-1);
            \draw [->-] (0,-1) to[bend left = 18] (1,0);
            \draw [->-] (1,0) to[bend left = 18] (0,-1);
            \draw [->-] (0,1) to[bend left = 18] (1,0);
            \draw [->-] (1,0) to[bend left = 18] (0,1);
            \fill (-1,0) circle (2pt);
            \fill (0,-1) circle (2pt);
            \fill (0,1) circle (2pt);
            \fill (1,0) circle (2pt);
        \end{scope}

        \begin{scope}[xshift=6cm,yshift=0cm,scale=1]
            \draw [->-] (-1,0) to[bend left = 18] (0,-1);
            \draw [->-] (0,-1) to[bend left = 18] (-1,0);
            \draw [->-] (-1,0) to[bend left = 18] (0,1);
            \draw [->-] (0,1) to[bend left = 18] (-1,0);
            \draw [->-] (-1,0) to[bend left = 18] (1,0);
            \draw [->-] (1,0) to[bend left = 18] (-1,0);
            \draw [->-] (0,-1) to[bend left = 18] (0,1);
            \draw [->-] (0,1) to[bend left = 18] (0,-1);
            \draw [->-] (0,-1) to[bend left = 18] (1,0);
            \draw [->-] (1,0) to[bend left = 18] (0,-1);
            \draw [->-] (0,1) to[bend left = 18] (1,0);
            \draw [->-] (1,0) to[bend left = 18] (0,1);
            \fill (-1,0) circle (2pt);
            \fill (0,-1) circle (2pt);
            \fill (0,1) circle (2pt);
            \fill (1,0) circle (2pt);
        \end{scope}

        \begin{scope}[xshift=7.7cm,yshift=1.7cm,scale=1]
            \draw[red] [->-] (-1,0) to[bend left = 18] (0,-1);
            \draw[red] [->-] (0,-1) to[bend left = 18] (-1,0);
            \draw[red] [->-] (-1,0) to[bend left = 18] (0,1);
            \draw[red] [->-] (0,1) to[bend left = 18] (-1,0);
            \draw[red] [->-] (-1,0) to[bend left = 18] (1,0);
            \draw[red] [->-] (1,0) to[bend left = 18] (-1,0);
            \draw[red] [->-] (0,-1) to[bend left = 18] (0,1);
            \draw[red] [->-] (0,1) to[bend left = 18] (0,-1);
            \draw[red] [->-] (0,-1) to[bend left = 18] (1,0);
            \draw[red] [->-] (1,0) to[bend left = 18] (0,-1);
            \draw[red] [->-] (0,1) to[bend left = 18] (1,0);
            \draw[red] [->-] (1,0) to[bend left = 18] (0,1);
            \fill[red] (-1,0) circle (2pt);
            \fill[red] (0,-1) circle (2pt);
            \fill[red] (0,1) circle (2pt);
            \fill[red] (1,0) circle (2pt);
        \end{scope}

        \begin{scope}[xshift=7.7cm,yshift=-1.7cm,scale=1]
            \draw[red] [->-] (-1,0) to[bend left = 18] (0,-1);
            \draw[red] [->-] (0,-1) to[bend left = 18] (-1,0);
            \draw[red] [->-] (-1,0) to[bend left = 18] (0,1);
            \draw[red] [->-] (0,1) to[bend left = 18] (-1,0);
            \draw[red] [->-] (-1,0) to[bend left = 18] (1,0);
            \draw[red] [->-] (1,0) to[bend left = 18] (-1,0);
            \draw[red] [->-] (0,-1) to[bend left = 18] (0,1);
            \draw[red] [->-] (0,1) to[bend left = 18] (0,-1);
            \draw[red] [->-] (0,-1) to[bend left = 18] (1,0);
            \draw[red] [->-] (1,0) to[bend left = 18] (0,-1);
            \draw[red] [->-] (0,1) to[bend left = 18] (1,0);
            \draw[red] [->-] (1,0) to[bend left = 18] (0,1);
            \fill[red] (-1,0) circle (2pt);
            \fill[red] (0,-1) circle (2pt);
            \fill[red] (0,1) circle (2pt);
            \fill[red] (1,0) circle (2pt);
        \end{scope}

        \begin{scope}[xshift=9.4cm,yshift=0cm,scale=1]
            \draw[red] [->-] (-1,0) to[bend left = 18] (0,-1);
            \draw[red] [->-] (0,-1) to[bend left = 18] (-1,0);
            \draw[red] [->-] (-1,0) to[bend left = 18] (0,1);
            \draw[red] [->-] (0,1) to[bend left = 18] (-1,0);
            \draw[red] [->-] (-1,0) to[bend left = 18] (1,0);
            \draw[red] [->-] (1,0) to[bend left = 18] (-1,0);
            \draw[red] [->-] (0,-1) to[bend left = 18] (0,1);
            \draw[red] [->-] (0,1) to[bend left = 18] (0,-1);
            \draw[red] [->-] (0,-1) to[bend left = 18] (1,0);
            \draw[red] [->-] (1,0) to[bend left = 18] (0,-1);
            \draw[red] [->-] (0,1) to[bend left = 18] (1,0);
            \draw[red] [->-] (1,0) to[bend left = 18] (0,1);
            \fill[red] (-1,0) circle (2pt);
            \fill[red] (0,-1) circle (2pt);
            \fill[red] (0,1) circle (2pt);
            \fill[red] (1,0) circle (2pt);
        \end{scope}

        \draw [->-] (1,0) -- (2,0);
        \draw [->-] (3,-1) -- (3,-2);
        \draw [->-] (4,0) to[bend left = 18] (5,0);
        \draw [->-] (5,0) to[bend left = 18] (4,0);
        \draw [->-] (7,0) -- (7.7,0.7);
        \draw [red][->-] (7.7,0.7) -- (8.4,0);
        \draw [red][->-] (8.4,0) -- (7.7,-0.7);
        \draw [->-] (7.7,-0.7) -- (7,0);
        
    \draw[red] [decorate,decoration={brace,mirror,amplitude=10pt}]
    (6.5,-3) -- (10.5,-3) node [midway,yshift=-1cm] {\textcolor{red}{A cyclic leaf}};

    \draw[blue] [decorate,decoration={brace,amplitude=10pt}]
    (-1,1.3) -- (1,1.3) node [midway,yshift=1cm] {\textcolor{blue}{A direct leaf}};

	\end{tikzpicture}
    \caption{A 4-tree} \label{fig_tree}
    \end{figure}

The following easy observation will be useful during the proof. 
\begin{observation}\label{obs_1}
Let $G$ be a $k$-tree. 
Then all vertices of $G$ have mindegree at least $k-1$. Moreover, $G$ has at least $k+1$ vertices of mindegree $k-1$, except if $G = \olra{K}_{k}$ or if it is a symmetric path of odd length (and thus $k=2$). 
\end{observation}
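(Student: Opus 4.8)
The plan is to prove the two assertions (the mindegree bound and the count) together, by fixing one construction witnessing that $G$ is a $k$-tree. Such a construction partitions $V(G)$ into \emph{base blocks} $B_1,\dots,B_m$, each isomorphic to a member of $\mc B_{k-1}$, together with the \emph{composition arcs} added by the direct and cyclic compositions; every arc of $G$ is either internal to some $B_j$ or a composition arc. Inside its block each vertex $v$ has in- and out-degree exactly $k-1$, so $d^+(v)=(k-1)+o(v)$ and $d^-(v)=(k-1)+i(v)$, where $o(v),i(v)$ count the out- and in-composition-arcs at $v$. Since $\dmax(G)\le k$, we have $o(v),i(v)\in\{0,1\}$, so $d_{min}(v)=(k-1)+\min(o(v),i(v))\ge k-1$, which proves the first assertion; moreover $d_{min}(v)=k-1$ unless $o(v)=i(v)=1$. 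Calling the latter vertices \emph{saturated} and writing $s$ for their number and $n=|V(G)|$, the quantity to bound from below is exactly $n-s$.

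Next I would bound the number $A$ of composition arcs. Tracking the number of connected components through the construction gives $A=m-1+c$, where $c$ is the number of cyclic compositions; since each cyclic composition reduces the component count by at least one and the total reduction is $m-1$, we get $c\le m-1$. As every block has at least $k$ vertices, $n\ge mk$, and since each saturated vertex is the tail of a distinct composition arc, $s\le A$. Combining these,
\[
 n-s \;\ge\; mk-A \;=\; m(k-1)+1-c \;\ge\; m(k-2)+2 .
\]
If $m=1$ then $G$ is a single member of $\mc B_{k-1}$, and inspecting $\mc B_{k-1}$ directly shows all $n$ vertices have mindegree $k-1$ with $n\ge k+1$ unless $G=\olra{K}_k$. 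If $m\ge 2$ and $k\ge 3$, the displayed bound gives $n-s\ge 2k-2\ge k+1$, as required.

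The only remaining case is $m\ge 2$ and $k=2$, where the bound yields merely $n-s\ge 2$; this is the main obstacle. Here, if $n-s\ge 3$ we are done, so I would assume $n-s=2$ and show $G$ is forced to be a symmetric path of odd length. Equality must hold in each inequality above: $n=2m$ (every block is a digon), $c=m-1$ with every cyclic composition of size $2$ and no direct composition (so blocks are attached only by digons), and $s=A$, which forces the two non-saturated vertices to carry no composition arc. Since a size-$2$ cyclic composition already saturates both of its endpoints, each vertex lies in at most one such composition, hence each digon block meets at most two of them. Letting $R$ be the graph whose nodes are the $m$ digon blocks and whose edges are the composition digons, $R$ is connected with $m-1$ edges, hence a tree, and has maximum degree at most $2$, hence is a path. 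Translating back, $G$ is a chain of digons joined by digons, that is, a symmetric path on $2m$ vertices whose two degree-one ends are precisely the two mindegree-$(k-1)$ vertices — exactly the excepted symmetric path of odd length.

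The delicate point throughout is the equality analysis in the last paragraph, together with the implicit claim that a fixed construction yields a clean partition into base blocks and composition arcs on which $n$, $s$, $A$, $m$, and $c$ are all controlled; once those are in place the argument is bookkeeping.
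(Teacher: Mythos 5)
Your proposal is correct, and there is in fact nothing in the paper to compare it against: the paper states this observation without proof, calling it easy, so your write-up supplies the missing argument rather than deviating from one. The skeleton is sound: unfolding a witnessing construction into base blocks $B_1,\dots,B_m\in\mathcal B_{k-1}$ plus composition arcs, noting that every vertex has in- and out-degree exactly $k-1$ inside its block (true for directed cycles, symmetric odd cycles, and $\olra K_k$ alike), so that $\Delta_{max}(G)\le k$ forces $o(v),i(v)\in\{0,1\}$; the component-count identity $A=m-1+c$ with $c\le m-1$; and the bound $n-s\ge m(k-2)+2$, whose three cases ($m=1$; $m\ge 2$, $k\ge 3$; $m\ge 2$, $k=2$) exhaust everything, with the $m=1$ inspection correctly yielding $\olra K_k$ as the only small member of $\mathcal B_{k-1}$ for every $k\ge 2$. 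One step in the $k=2$ equality analysis deserves an explicit line: as you derive it, $s=A$ via tails only shows that every vertex with $o(v)=1$ is saturated; to conclude that the two non-saturated vertices carry \emph{no} composition arc you also need the head count, namely $\sum_v i(v)=A$, so the set $\{v: i(v)=1\}$ has size $A=s$, contains the saturated set, and hence equals it. Similarly, the assertion that $c=m-1$ forces $d=0$ and all $\ell_j=2$ follows from $m-1=d+\sum_j(\ell_j-1)\ge d+c$, which is worth writing out. With those two lines added, the equality case correctly forces digon blocks joined by size-$2$ cyclic compositions along a path (your auxiliary graph $R$ being a tree of maximum degree $2$), i.e.\ a symmetric path of odd length, exactly matching the stated exceptions.
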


The main ingredient of the proof is the following lemma:

\begin{lemma}\label{lem:ktree}
Let $G$ be a connected digraph and $k = \Delta_{max}(G) \geq 2$. 
Then either $G$ is a member of $\mathcal B_k$, or $G$ is a $k$-tree, or there exists $v \in V(G)$ such that $d_{max}(v) = k$ and no connected component of $G - \{v\}$ is a $k$-tree. 
\end{lemma}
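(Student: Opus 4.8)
The plan is to assume that $G\notin\mathcal B_k$ and that $G$ is not a $k$-tree, and to produce the desired vertex by contradiction. Suppose that \emph{every} vertex $v$ with $d_{max}(v)=k$ is \emph{bad}, in the sense that $G-v$ has at least one connected component that is a $k$-tree. Since $\Delta_{max}(G)=k$ there is a vertex of maxdegree $k$, so this assumption yields a bad vertex $v_0$ together with a $k$-tree component $C_0$ of $G-v_0$; note that $v_0$ is then the only vertex outside $C_0$ with a neighbour in $C_0$. Among all such pairs I would pick one with $|V(C_0)|$ maximum, in order to keep control of the surrounding structure in the delicate cases. The point of passing to $C_0$ is that, although $G$ itself need not contain any member of $\mathcal B_{k-1}$, the $k$-tree $C_0$ does, which gives us a leaf structure to exploit.

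The heart of the argument is to peel off one block of a $k$-leaf of $C_0$ by deleting a single, carefully chosen vertex. Assume first that $C_0$ is neither a member of $\mathcal B_{k-1}$ nor a cyclic composition of members of $\mathcal B_{k-1}$, so that $C_0$ has at least two vertex-disjoint $k$-leaves. I would locate a $k$-leaf $F$ of $C_0$ avoiding $N(v_0)$: since any neighbour of $v_0$ in $C_0$ must have mindegree exactly $k-1$ in $C_0$ (the arc to $v_0$ saturates its degree, given Observation~\ref{obs_1}), and since $v_0$ has at most $2k$ neighbours while $C_0$ has many mindegree-$(k-1)$ vertices spread over its leaves, such a leaf should exist. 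Because $N(v_0)\cap V(F)=\emptyset$, the leaf $F$ is in fact a $k$-leaf of $G$ as well. I then pick $w\in V(F)$ to be the connection vertex of $F$ when $F$ is a direct leaf, and the first connection vertex $v_1$ of the underlying $k$-path when $F$ is a cyclic leaf; in either case the extra arcs joining that block to the rest of $G$ give $d_{max}(w)=k$.

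Next I would verify that $w$ is good, i.e.\ that no component of $G-w$ is a $k$-tree. The component of $G-w$ lying inside $F$ is a member of $\mathcal B_{k-1}$ with one vertex deleted, hence has a vertex of mindegree $<k-1$, so it is not a $k$-tree by Observation~\ref{obs_1}. The other component is $G$ with a single block of $F$ removed, and deleting $w$ exactly undoes one composition step: this component is obtained from $G-F$ (respectively the rest of the $k$-path) by a direct or cyclic composition, and one checks that it is a $k$-tree \emph{if and only if} $G$ is a $k$-tree. Since $G$ is not a $k$-tree, this component is not a $k$-tree either. Hence $w$ is good, contradicting the assumption that every maxdegree-$k$ vertex is bad.

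The main obstacle is the boundary cases that escape this leaf-peeling step: when $C_0$ is itself a single member of $\mathcal B_{k-1}$ or a single cyclic composition of such (so it has no proper $k$-leaf to peel), and when the attachment of $v_0$ meets every $k$-leaf of $C_0$, which can genuinely occur when $C_0$ has only two leaves. In these situations I would analyse $G[V(C_0)\cup\{v_0\}]$ together with the remaining components of $G-v_0$ directly, using the degree budget $d_{max}(v_0)=k$ and the count of mindegree-$(k-1)$ vertices from Observation~\ref{obs_1}, to show that either a good vertex can still be exhibited, or the attachment is so rigid that $G$ is forced to be a member of $\mathcal B_k$ (typically $\olra{K}_{k+1}$ or a symmetric odd cycle) or a $k$-tree, each of which contradicts our standing assumptions. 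Ruling out these rigid configurations, and confirming that a free leaf exists in the few-leaf case, is where the real work lies.
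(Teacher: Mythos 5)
You correctly anticipate where the difficulty lies, but you do not resolve it, and what you defer is precisely where the paper's proof does almost all of its work. Your counting heuristic ($v_0$ has at most $2k$ neighbours, while $C_0$ has at least $k+1$ vertices of mindegree $k-1$ by Observation~\ref{obs_1}) cannot produce a $k$-leaf of $C_0$ avoiding $N(v_0)$: a $k$-tree may have exactly two $k$-leaves (a $k$-path), and $v_0$ can attach inside both of them; and when $C_0$ is a single member of $\mathcal B_{k-1}$ or a cyclic composition of members of $\mathcal B_{k-1}$ there is no proper leaf to peel at all. These are not degenerate corners: they are exactly the configurations from which the exceptional outcomes arise, and excluding them is the substance of the lemma. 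The paper disposes of them by (i) a case analysis on the two end leaves $T_a, T_b$ containing the neighbours $a,b$ of the relevant vertex, counting how many maxdegree-$k$ vertices each copy of a member of $\mathcal B_{k-1}$ can contain (three or two give a good vertex or a free leaf; exactly one forces the component to be a $k$-path and $G$ to be a cyclic composition, hence a $k$-tree), and (ii) in the residual case where every ``special'' vertex $x$ has its component $H_x$ a $k$-tree, a degree-propagation argument using strong connectivity of members of $\mathcal B_{k-1}$ to show every vertex of $T_x$ has in- and out-degree $k$, after which the bound of at most $k$ mindegree-$(k-1)$ vertices in $H_x$ forces $H_x \cong \olra{K}_k$ or a symmetric path, i.e.\ $G = \olra{K}_{k+1}$ or a symmetric odd cycle, contradicting $G \notin \mathcal B_k$. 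Your proposal offers no substitute for either argument, so the proof is incomplete on exactly the hard cases; your maximality choice of $|V(C_0)|$ is never used and does not obviously help.

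Two further points in the part you do carry out. First, your main-line step is essentially the paper's opening claim that $G$ has no $k$-leaf (the connection vertex of a direct or cyclic $k$-leaf of $G$ is the desired good vertex), so that portion is sound and aligned with the paper. Second, in the cyclic-leaf case your assertion that ``deleting $w$ exactly undoes one composition step'' is not literally true: $G$ is obtained from $G - T_1$ and $T_1$ by adding \emph{two} arcs, $uv_1$ and $v_1v_2$, with $u \neq v_2$ in general, which is neither a direct nor a cyclic composition of those two digraphs. To get ``$G - T_1$ a $k$-tree $\Rightarrow G$ a $k$-tree'' you need a converse decomposition lemma (a bridge arc of a $k$-tree splits it into two $k$-trees, so that $G-T$ is a $k$-tree and $G$ is then a cyclic composition of $k$-trees); this is true but requires an induction on the composition structure that ``one checks'' does not supply.
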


\begin{proof}
Let $G$ be a digraph with $\dmax(G) = k$ and assume that $G$ is not a member of $\mathcal B_k$ nor a $k$-tree. 

\begin{claim}\label{clm:no_leaf}
$G$ has no $k$-leaf.
\end{claim}

\begin{proofclaim}
Assume first that $G$ has a direct $k$-leaf $T$, and let $v$ be the unique vertex of $T$ having a neighbour outside of $T$. Recall that $T$ belongs to $\mathcal B_{k-1}$ by definition of a direct $k$-leaf. 
Then $d_{max}(v) = k$ and $G-\{v\}$  has two connected components, $T-\{v\}$ and $G -T$. $G-T$ is not a $k$-tree otherwise $G$ is too, and $T-\{v\}$ is clearly not a $k$-tree, so we are done. 

Assume now that $G$ has a cyclic $k$-leaf $T$ made of $\ell$ members $T_1, \dots, T_{\ell}$ of $\mathcal B_{k-1}$ and let $v_1, \dots, v_{\ell}$ be as in the definition of cyclic $k$-leaf. Then $d_{max}(v_1) = k$ and $G - \{v_1\}$ has two connected components, $T_1-\{v_1\}$ and $G-T_1$. As in the previous case, none of them is a $k$-tree. 
\end{proofclaim}

We say that a vertex $v$ of $G$ is \textit{special} if it is contained in an induced subdigraph  of $G$ isomorphic to a member of $\mc B_{k-1}$ and $d_{max}(v) = k$. 
For each special vertex $x$, choose arbitrarily an induced subdigraph of $G$ isomorphic to a member of $\mathcal B_{k-1}$ that we name $T_x$. Moreover, we name $H_x$ the connected component of $G-x$ containing $T_x - x$. Note that in the case where $G- x$ is connected, we have $H_{x} = G- x$. 

If no induced subdigraph of $G$ is isomorphic to a member of $\mathcal B_{k-1}$, then any vertex $v$ with maxdegree $k$ is such that no component of $G - \{v\}$ is a $k$-tree. 
Moreover, if $G$ has an induced subdigraph $H$ isomorphic to a member of $\mc B_{k-1}$, then at least one of its vertices must have a maxdegree equal to $k$, otherwise $G = H$ is a $k$-tree, a contradiction. 
Hence, $G$ must contain some special vertices, and every subdigraph of $G$ isomorphic to a member of $\mathcal B_{k-1}$ contains a special vertex. 
\medskip

Assume there exists a special vertex $v$ such that $H_v$ is not a $k$-tree. 
If $G-v$ is connected, then $v$ is such that $d_{max}(v) = k$ and no component of $G - \{ v \}$ is a $k$-tree. 
So we can assume $G-v$ is not connected. 

Assume first $v$ has only one neighbour $a$ in $G-H_v$. Let $G_a$ be the connected component of $G-v$ containing $a$. We may assume  $G_a$ is a $k$-tree, otherwise $v$ is such that $d_{max}(v) = k$ and no component of $G - \{ v \}$ is a $k$-tree.  
If $G_a$ is isomorphic to a member of $\mathcal B_{k-1}$, then $G_a$ is a $k$-leaf of $G$ (direct of cyclic depending if $a$ and $v$ are linked by a single arc of a digon), if $G_a$ is a cyclic composition of members of $\mathcal B_{k-1}$, then $G$ contains a cyclic $k$-leaf, and otherwise $G_a$ has at least two $k$-leaves, one of the two does not contain $a$ and is thus a $k$-leaf of $G$. Each case contradicts (\ref{clm:no_leaf}).

So $v$ has at least two neighbours $a$ and $b$ in $G-H_v$, and $a \neq b$. 
If $a$ and $b$ are in two distinct connected component $G_a$ and $G_b$ of $G-v$, then one of $G_a$ or $G_b$ must be a $k$-tree, for otherwise $v$ is such that $d_{max}(v) = k$ and no component of $G - \{ v \}$ is a $k$-tree, and we find a $k$-leaf as in the previous case.

So we may assume that $G-H_v$ is connected. 
Moreover, $G-H_v$ must be a $k$-tree, for otherwise $v$ is such that $d_{max}(v) = k$ and no component of $G - \{ v \}$ is a $k$-tree.
If $G-H_v$ has a $k$-leaf disjoint from $\{a,b\}$, then it is a $k$-leaf of $G$, a contradiction to (\ref{clm:no_leaf}). So $G-H_v$ is isomorphic to a member of $\mathcal B_{k-1}$ or is a cyclic composition of members of $\mathcal B_{k-1}$ or has exactly two leaves, $T_a$ and $T_b$ containing respectively $a$ and $b$. 

If $G-H_v$ is a member of $\mathcal B_{k-1}$, then $G-a$ is connected and is not a $k$-tree, so $a$ is such that $d_{max}(a) = k$ and no component of $G - \{ a \}$ is a $k$-tree 
If $G-H_v$ is a cyclic composition of members of $\mathcal B_{k-1}$, then $a$ cannot be a cutvertex of $G-H_v$ (otherwise $d_{max}(a)>k$), and thus $G-a$ is connected and is not a $k$-tree, so again $a$ is such that $d_{max}(a) = k$ and no component of $G - \{ a \}$ is a $k$-tree. 

So $H_v$ has exactly two leaves $T_a$, $T_b$ as explained above. 
Observe that the only vertex of $T_a$ with maxdegree $k$ in $G$ is $a$, for otherwise $G-a$ is connected and is not a $k$-tree, so $a$ satisfies the theorem. 
Same holds for $T_b$ and $b$. 
Let $T$ be an induced subdigraph of $G-H_v$ isomorphic to a member of $\mathcal B_{k-1}$ that does not contain $a$ nor $b$. 
If $T$ has at least $3$ vertices of maxdegree $k$, then $G-H_v$ contains a $k$-leaf disjoint from $\{a,b\}$, a contradiction to (\ref{clm:no_leaf}). 
If $T$ has exactly two vertices of maxdegree $k$, then deleting one leads to a connected digraph which is not a $k$-tree and we are done. 
So we may assume that each subdigraph of $G-H_v$ isomorphic to a member of $\mathcal B_{k-1}$ contains exactly one vertex of maxdegree $k$. 
It implies that $G-H_v$ is a $k$-path and that $G$ is a cyclic composition of members of $\mathcal B_{k-1}$ and thus a $k$-tree, a contradiction. 
\medskip

We may now assume that for every special vertex $v$, $H_v$ is a $k$-tree. 
Let $x$ be a special vertex and assume without loss of generality that $d^-(x) = k$. 
Let $S$ the set of vertices in $T_x$ with in-degree $k$. 
If $T_x -S$ is non-empty, there must exists an arc $st$ where $s \in S$ and $t \in T_x - S$ (because $T_x$ is strongly connected).  
Since $H_s$ is a $k$-tree, $t$ must have in-degree at least $k-1$ in $G-s$, and thus has in-degree $k$ in $G$, a contradiction. So every vertex of $T_x$ has in-degree $k$. 
Let $y$ be an in-neighbour of $x$ in $T_x$. 
As $H_x$ is a $k$-tree, $y$ has out-degree at least $k-1$ in $H_x$, and thus has out-degree $k$ in $G$. Now, by the same reasoning as above, we get that every vertex of $T_x$ has out-degree $k$. 
This proves that for every special vertex $v$, every vertex $u$ in $T_v$ has in- and out-degree $k$.  

Let $x$ be a special vertex. We know that $H_x$ is a $k$-tree. So every vertex of $H_x$ is contained in a subdigraph isomorphic to a member of $\mc B_{k-1}$, and thus has in- and out-degree $k$ in $G$. 
Hence, every vertex of $H_x$ has in- and out- degree $k$ in $H_x$ except the neighbours of $v$. So $H_v$ has at most $k$ vertices of mindegree $k-1$. 
If $k \geq 3$, it implies that $H_x$ is isomorphic to $\olra{K_k}$ and thus $G = \olra K_{k+1}$, a contradiction. And if $k=2$, it implies that $H_x$ is a symmetric path of odd length (obtained by doing a sequence of cyclic composition of digons) and thus $G$ is a symmetric cycle of odd length, a contradiction.

\end{proof}

\begin{theorem}
Let $G$ be a connected digraph with $\Delta_{max}(G) = k$. Then $\dic(G) \leq k+1$ and equality occurs if and only if $G$ is a member of $\mathcal B_k$. 
\end{theorem}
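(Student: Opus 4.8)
The inequality $\dic(G) \le k+1$ is exactly the greedy bound established in Subsection~\ref{subsec:def}, so only the characterisation of equality needs work. The ``if'' direction is a short case check: a directed cycle needs two colours, while for the two symmetric families one has $\dic(\olra H) = \chi(H)$, so a symmetric odd cycle needs $3 = k+1$ colours and $\olra K_{k+1}$ needs $k+1$ colours, both by the undirected Brooks' Theorem. I therefore concentrate on the converse in contrapositive form: \emph{if $G$ is connected, $\dmax(G) = k \ge 2$ and $G \notin \mc B_k$, then $\dic(G) \le k$}, which I would prove by induction on $|V(G)|$, the case $k = 1$ being the base case already settled before Section~\ref{sec:lovasz}.

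First I would clear the easy reductions. If $G$ is not $k$-regular, then Lemma~\ref{lem:reg} already gives $\dic(G) \le \dmax(G) = k$; so I may assume $G$ is $k$-regular. By Observation~\ref{obs_1} every $k$-tree contains a vertex of mindegree $k-1$, hence a $k$-regular digraph is never a $k$-tree. Applying Lemma~\ref{lem:ktree} to $G$, which is neither a member of $\mc B_k$ nor a $k$-tree, then produces a vertex $v$ with $d_{max}(v) = k$ such that no connected component of $G - v$ is a $k$-tree. This $v$ is the whole point of the structural lemma, and the remainder of the argument deletes it, colours what is left, and puts it back.

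Next I would colour $G - v$. Each component $C$ of $G - v$ is connected with $\dmax(C) \le k$; moreover $C \notin \mc B_k$, since $G$ is $k$-regular and $v$ has a neighbour in $C$, so a member of $\mc B_k$ occupying $C$ would force that neighbour of $v$ to have maxdegree at least $k+1$ in $G$; and $C$ is not a $k$-tree by the choice of $v$. If $\dmax(C) < k$ then greedy colouring gives $\dic(C) \le k$, and if $\dmax(C) = k$ then $\dic(C) \le k$ by the induction hypothesis. Hence $G - v$ is $k$-dicolourable.

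The remaining and genuinely hard step is to extend a $k$-dicolouring of $G - v$ to $v$. Since $v$ has exactly $k$ out-neighbours and $k$ in-neighbours, a colouring extends precisely when some colour is absent from $N^+(v)$ or absent from $N^-(v)$; the obstruction is that both neighbourhoods may be ``rainbow'' simultaneously. To defeat this I would strengthen the induction to a degree-choosability statement, namely that a connected non-$k$-tree digraph (avoiding the exceptions) admits an acyclic colouring from arbitrary lists of sizes $d_{max}(\cdot)$; this lets me commit a colour to $v$ in advance and transfer the resulting constraint into the components, where the failure of each $C$ to be a $k$-tree --- the directed analogue of not being a Gallai tree --- supplies the slack needed to recolour, exactly as in Tverberg's and Borodin's treatments of the undirected theorem. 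I expect this list/recolouring step to be the crux of the whole proof, the reductions and the colouring of $G - v$ above being routine once Lemma~\ref{lem:ktree} is available.
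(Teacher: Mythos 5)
Your reductions (regularity via Lemma~\ref{lem:reg}, excluding $\mc B_k$ components of $G-v$, one application of Lemma~\ref{lem:ktree}) match the paper up to the deletion of $v$, but your final step --- extending a $k$-dicolouring of $G-v$ back to $v$ --- is exactly where the proof is missing, and you say so yourself: the degree-choosability strengthening is announced, not proved. Worse, the statement you announce is false as you phrase it. Being a non-$k$-tree is \emph{not} the directed analogue of being a non-Gallai-tree: the paper's $k$-trees are built from \emph{vertex-disjoint} members of $\mc B_{k-1}$ joined by single arcs or cyclic arc sequences, whereas the obstructions to degree-list-colouring of digraphs (Harutyunyan--Mohar's Gallai theorem, \cite{HMGallai}) are block structures, i.e.\ pieces of $\bigcup_j \mc B_j$ glued at cutvertices. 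Concretely, for $k=2$ take two directed triangles sharing one vertex: this is not a $2$-tree (compositions never identify vertices), yet with lists of size $d_{max}$ --- $\{a\}$ on the non-cut vertices of one triangle, $\{b\}$ on those of the other, $\{a,b\}$ at the cutvertex --- it admits no acyclic colouring. So ``$C$ is not a $k$-tree'' supplies none of the slack you need, and repairing your plan would force you to work with the full Harutyunyan--Mohar obstruction class, i.e.\ to reprove the already-known list-colouring theorem that the paper explicitly sets aside; Lemma~\ref{lem:ktree} gives no control over that class.

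The paper avoids the extension problem entirely, and this is the idea you are missing: it never puts $v$ back. Instead it inducts on $k$ (not on $|V(G)|$) and applies Lemma~\ref{lem:ktree} \emph{recursively}: inside each component of $G-v_1$ that still has maxdegree $k$ it finds another such vertex, and so on, producing $v_1,\dots,v_s$ where each $v_i$ has maxdegree $k$ in its current component. Since $\dmax(G)=k$, all $k$ out-neighbours (or all $k$ in-neighbours) of $v_i$ lie inside that component, so $v_i$ has no out-neighbour or no in-neighbour among $v_1,\dots,v_{i-1}$; hence $\{v_1,\dots,v_s\}$ induces an acyclic digraph, which is given the reserved colour $k$. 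The components that remain have maxdegree at most $k-1$ and, being non-$k$-trees, are not in $\mc B_{k-1}$, so the induction hypothesis on $k$ colours them with $k-1$ colours. One extra colour for an acyclic transversal replaces the list/recolouring machinery you were hoping for; as written, your proposal has a genuine gap at its self-identified crux.
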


\begin{proof} 
We proceed by induction on $k$, so we may assume $k \geq 2$. 
If $G$ is a member of $\mathcal B_k$, then we are done.  If $G$ is a $k$-tree, then it is $k$-dicolourable because members of $\mathcal B_{k-1}$ are $k$-dicolourable, and compositions preserve $k$-dicolourability. 

So, by Lemma \ref{lem:ktree}, $G$ has a vertex $v_1$ with $d_{max}(v_1) = k$ and such that no connected component of $G-\{v_1\}$ is a $k$-tree. Let $G_2, \dots, G_r$ be the connected components of $G - \{v_1\}$. 
Observe first that each $G_i$ has a vertex with mindegree at most $k-1$, so it is not a member of $\mathcal B_k$. 
For each $G_i$, either $\dmax(G_i) \leq k-1 $ and since $G_i$ is not a $k$-tree, it is $k-1$-dicolourable by induction, or, by Lemma~\ref{lem:ktree}, $G_i$ contains a vertex $v_i$ such that the maxdegree of $v_i$ in $G_i$ equal $k$ and no connected component of $G_i - \{v_i\}$ is a $k$-tree. In the latter case, we choose such a vertex $v_i$, and continue this procedure on the connected components of $G_i \setminus \{v_i\}$ and so on. 

We obtain a set of ordered vertices $v_1, \dots, v_s$ (we apply the procedure level by level, putting an arbitrary order inside each level) such that $v_i$ has either no in-neighbour or no out-neighbour in $\{v_1, \dots, v_{i-1}\}$ (because maxdegree of $v_i$ in $G_i$ is $k = \dmax(G)$). So the digraph induced by $\{v_1, \dots, v_s\}$ is acyclic. 
Moreover, $G-\{v_1, \dots, v_s\}$ is made of vertex disjoint $(k-1)$-dicolourable induced subgraph of $G$. Hence, $G$ is $k$-dicolourable.  
\end{proof}


\section{Partitioned dicolouring}
In this section, we adapt a proof of Brooks' Theorem based on a specific partition of the vertices introduced by Lov\'asz in~\cite{L66}.  See section 5 of \cite{CR14} for the undirected version of the proof as well as a short history of the involved methods. Same kind of methods has been recently used in~\cite{BSS21} to prove a generalisation of the directed Brooks' Theorem. 

Let $G = (V,A)$ be a digraph. 
We say that $G$ is \emph{$r$-special} if for every vertex $v \in V$, either $d_{min}(v) < r$ or $d_{min}(v) = d_{max}(v) = r$ (note that last equality is equivalent to $d^+(v)= d^-(v) = r$). 
Let $r_1$ and $r_2$ be two positive integers. A partition $\mc P = (V_{1}, V_{2})$ of $V(G)$ is \emph{($r_1, r_2$)-normal} if it minimizes $r_{2}|A(G[V_1])| + r_{1}|A(G[V_2)]|$.

Next observation is used frequently in the proof and is a basic property of $(r_1,r_2)$-normal partition. 
\begin{observation}\label{obs:special}
Let $G$ be a digraph. If $\mc P$ is a $(r_1, r_2)$-normal partition of $G$ with $r_1 + r_2 \geq \dmax(G) \geq 1$, then $G[V_{1}]$ is $r_{1}$-special and $G[V_{2}]$ is $r_{2}$-special. 
\end{observation}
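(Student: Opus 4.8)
The plan is to prove the contrapositive via the standard exchange (vertex-moving) argument for normal partitions, and then close it off using the degree bound $\dmax(G) \le r_1 + r_2$. I will prove that $G[V_1]$ is $r_1$-special; the statement that $G[V_2]$ is $r_2$-special then follows by the same argument after swapping the roles of $(V_1, r_1)$ and $(V_2, r_2)$, since the objective $r_2|A(G[V_1])| + r_1|A(G[V_2])|$ is symmetric under this swap.

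First I would set up the contradiction. Suppose $G[V_1]$ is not $r_1$-special, so some $v \in V_1$ fails the defining condition \emph{inside} $G[V_1]$. Writing $a = |N^+_{V_1}(v)|$ and $b = |N^-_{V_1}(v)|$ for the out- and in-degree of $v$ within $V_1$, the failure means that neither $\min(a,b) < r_1$ nor $a = b = r_1$ holds. Unwinding this: $\min(a,b) \ge r_1$ forces $a \ge r_1$ and $b \ge r_1$, and the negation of $a=b=r_1$ then gives $\max(a,b) > r_1$. Hence $a + b > 2r_1$.

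Next I would run the exchange step. Consider the partition $\mathcal P'$ obtained by moving $v$ from $V_1$ to $V_2$. The only arcs whose bookkeeping changes are those incident to $v$: the $a+b$ arcs joining $v$ to its $V_1$-neighbours leave $G[V_1]$, while the $c+e$ arcs joining $v$ to its $V_2$-neighbours enter $G[V_2]$, where $c = |N^+_{V_2}(v)|$ and $e = |N^-_{V_2}(v)|$ (a digon, as always, contributing two arcs). Thus the objective changes by $r_1(c+e) - r_2(a+b)$. Since $\mathcal P$ is $(r_1,r_2)$-normal, it minimizes the objective, so this change is nonnegative, giving $r_1(c+e) \ge r_2(a+b)$. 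Now I invoke the degree bound: $d^+(v) = a+c \le \dmax(G) \le r_1+r_2$ and $d^-(v) = b+e \le r_1+r_2$, so $c+e \le 2(r_1+r_2) - (a+b)$. Substituting, $r_2(a+b) \le r_1\bigl(2(r_1+r_2)-(a+b)\bigr)$, i.e. $(r_1+r_2)(a+b) \le 2r_1(r_1+r_2)$, and since $r_1+r_2 > 0$ this yields $a+b \le 2r_1$, contradicting $a+b > 2r_1$.

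There is no genuine obstacle here; the whole content sits in the single exchange inequality. The only points demanding care are the correct logical unwinding of the definition of \emph{$r$-special} (so that the violating vertex really satisfies $a,b \ge r_1$ with $a+b > 2r_1$), the arc-counting when a vertex is moved (each incident arc, including both arcs of a digon, is accounted once), and attaching the coefficients $r_1,r_2$ to the correct parts so that the minimality of $\mathcal P$ pushes the inequality in the intended direction. With these three checks in place the computation is immediate.
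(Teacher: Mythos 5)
Your proof is correct and takes essentially the same route as the paper's: both argue by contradiction via the exchange argument, moving the violating vertex $v$ (with $a+b \geq 2r_1+1$ arcs inside its part) to the other part and contradicting the minimality of the $(r_1,r_2)$-normal partition using the bounds $d^+(v), d^-(v) \leq \Delta_{max}(G) \leq r_1+r_2$. The only cosmetic difference is that you keep $a+b$ symbolic and sum the two degree bounds, whereas the paper first fixes without loss of generality $d^-_{G[V_1]}(v) \geq r_1$ and $d^+_{G[V_1]}(v) \geq r_1+1$ and plugs the resulting per-direction bounds directly into the weighted difference (and, in passing, your sign bookkeeping for the exchange inequality is the correct one; the paper's displayed difference has its inequality direction reversed, a harmless typo).
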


\begin{proof}
Assume for contradiction and without loss of generality that $G[V_1]$ is not $r_1$-special. Then there is $v_1 \in V_1$ such that $d_{min}(v_1) \geq r_1$ and $d_{max}(v_1) \geq r_1 + 1$ in $G[V_1]$. Assume without loss of generality that $d^-_{G[V_1]}(v_1) \geq r_1$ and $d^+_{G[V_1]}(v_1) \geq r_1 + 1$. 

Set $V'_1 = V_1 \setminus \{v_1\}$ and $V'_2 = V_2 \cup \{v_1\}$ and let us prove that the partition $(V'_1, V'_2)$ contradicts the fact that $(V_1, V_2)$ is $(r_1, r_2)$-normal. 
Since $r_1 + r_2 \geq \dmax(G)$, we have that $d^+_{G[V'_2]}(v_1) \leq r_2 - 1$ and $d^-_{G[V'_2]}(v_1) \leq r_2$. Hence:
$$(r_{2}|A(G[V_1])| + r_{1}|A(G[V_2)]|) - (r_{2}|A(G[V'_1])| + r_{1}|A(G[V'_2)]|) \leq -(2r_1+1)r_2 +r_1(2r_2-1) = -r_1-r_2<0$$ a contradiction. 

\end{proof}

Let $G$ be a digraph, and  $\mc P$  a $(r_1,r_2)$-normal partition of $G$ with $r_1 + r_2 \geq \dmax(G)$. 
We define the \emph{$\mc P$-components} of $G$ as the connected components of $G[V_{1}]$ and $G[V_{2}]$. 
A $\mc P$-component is an \emph{obstruction} if it is a member of $\mathcal B_{r_{1}}$ in $G[V_{1}]$ or a member of $\mathcal B_{r_{2}}$ in $G[V_{2}]$. 
A path $v_1 \dots v_k$  in the underlying graph of $G$ is \emph{$\mc P$-acceptable} if $v_{1}$ is in an obstruction and  vertices of $\mc P$ are in pairwise distinct $\mc P$-components.
We say that a  $\mc P$-acceptable path is \emph{maximal} if every neighbour of $v_{k}$ is in the same $\mc P$-component as some vertex in the path. 
Given a partition $\mc P$, to \emph{move a vertex $u$} is to move it to the other part of $\mc P$. 

Observation~\ref{obs:special} together with the fact that digraphs in $\mathcal B_k$ are $k$-regular easily implies the following facts that will be used routinely during the proof:
\begin{itemize}
    \item If a $\mc P$-component contains an obstruction, then the obstruction is the whole $\mc P$-component. 
    \item If a vertex $u$ is in an obstruction, then the partition created by moving $u$ is again $(r_1, r_2)$-normal.
\end{itemize}

\begin{lemma} \label{lem:partition}
Let $k \geq 2$. 
Let $G = (V, A)$ be a $k$-regular connected digraph not in $\mathcal B_{k}$ and let $r_{1}, r_{2} \geq 1$ such that $r_{1} + r_{2} = k$. There exists a $(r_1, r_2)$-normal partition $(V_{1},V_{2})$ such that, for $i \in \{1,2\}$, $G[V_{i}]$ is $r_i$-special and has no obstruction.
\end{lemma}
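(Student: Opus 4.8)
The plan is to read off the ``$r_i$-special'' conclusion directly from Observation~\ref{obs:special} and to spend all the effort on eliminating obstructions. Since $G$ is $k$-regular we have $\dmax(G) = k = r_1 + r_2$, so every $(r_1,r_2)$-normal partition already has both parts special by Observation~\ref{obs:special}. Hence it suffices to produce an $(r_1,r_2)$-normal partition with no obstruction. I would argue by extremality: among all $(r_1,r_2)$-normal partitions, choose one, call it $\mc P = (V_1,V_2)$, minimizing the number of obstructions, and suppose for contradiction that this number is positive.

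Given an obstruction, say in $G[V_1]$, I would build a maximal $\mc P$-acceptable path $P = v_1 \dots v_\ell$ with $v_1$ in that obstruction; such a path exists because one can keep extending it as long as maximality fails. A useful preliminary remark is that $P$ strictly alternates between $V_1$ and $V_2$: consecutive vertices are adjacent and lie in distinct $\mc P$-components, yet two distinct components of the same $G[V_i]$ have no edge between them, so $v_i$ and $v_{i+1}$ must lie on opposite sides. I would then perform a cascade of single-vertex moves $v_1, v_2, \dots$ Each move is legitimate (it preserves normality) because at the moment we move $v_i$ it lies in an obstruction, and the stated facts guarantee that moving a vertex of an obstruction keeps the partition $(r_1,r_2)$-normal. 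Moving $v_1$ destroys its obstruction: a member of $\mc B_{r_1}$ is $r_1$-regular, so deleting $v_1$ drops the in- and out-degree of its neighbours below $r_1$, and the remaining piece can no longer be a member of $\mc B_{r_1}$. The only obstruction that moving $v_1$ can create on the other side is the $\mc P$-component that now contains $v_1$, and this component necessarily contains $v_2$; if no obstruction is created the total count has strictly dropped, contradicting minimality, and otherwise we continue the cascade by moving $v_2$, and so on.

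The cascade therefore either terminates early, producing a normal partition with strictly fewer obstructions and hence a contradiction, or else reaches $v_\ell$. I would argue that moving $v_\ell$ cannot create a new obstruction, again giving a strict decrease and the final contradiction. This is the step I expect to be the main obstacle. Maximality of $P$ ensures that every neighbour of $v_\ell$ lies in a $\mc P$-component already met by the path, so after the cascade the new neighbours of $v_\ell$ all sit in components that have already been disturbed; using that any member of $\mc B_{r_i}$ is $r_i$-regular, that both parts remain $r_i$-special throughout (Observation~\ref{obs:special}), and that $G$ is connected, $k$-regular and not in $\mc B_k$, one rules out the formation of a fresh obstruction at the end. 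The delicate bookkeeping here is making precise that exactly one ``travelling'' obstruction is carried along the path and that it is annihilated rather than regenerated at $v_\ell$; the regularity of the exceptional digraphs, combined with special-ness, is what forbids a single added vertex from restoring an exact member of $\mc B_{r_i}$ against the de-regularisation caused by the previous move.
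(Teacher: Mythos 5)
Your setup matches the paper's: read specialness off Observation~\ref{obs:special}, take a partition minimizing the number of obstructions, build a maximal acceptable path from an obstruction, and cascade single-vertex moves along it (your alternation remark is correct, and your observation that each move of a vertex in an obstruction must, by minimality, create a new obstruction containing the next path vertex is exactly the paper's engine). But the step you yourself flag as the main obstacle is not merely delicate bookkeeping --- it is false as stated. Moving $v_{\ell}$ \emph{can} create a new obstruction, and in the paper's analysis it genuinely does: when $N_A(v_{\ell})=\{x\}$ with $x\neq v_1$, the cascade ends with the pair $\{x, v_{\ell}\}$ forming a fresh digon obstruction, so the obstruction count never strictly drops and no counting contradiction is available. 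The true proof does not rule out the regenerated obstruction; instead it analyzes what that obstruction must be, in a case split on $X=N_A(v_{\ell})$, and derives structural conclusions about $G$ itself: if $|X|\geq 2$, repeated moves force $G[A\cup B]$ and ultimately $G$ to be $\olra K_{k+1}$; if $|X|=1$ with $x\neq v_1$, iterating the argument forces digons along the whole path and $G$ becomes a symmetric odd cycle --- in both cases contradicting the hypothesis $G\notin\mc B_k$, which your sketch invokes but never actually uses.

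A second missing ingredient makes the gap unrepairable within your framework: the paper minimizes twice, first the number of obstructions and then the \emph{length of the shortest maximal $\mc P$-acceptable path}. This secondary minimality is what handles the case $X=\emptyset$ (moving only $v_1$ yields a strictly shorter maximal acceptable path $v_2\dots v_{\ell}$, contradiction) and is reused in the $|X|=1$ case to propagate the digon structure along the path. With obstruction count alone, the $X=\emptyset$ case gives you nothing: the travelling obstruction simply relocates and all invariants you track are preserved. So the proposal needs both the second extremality criterion and the case analysis on $X$ culminating in ``$G$ is an exception''; the claim that the cascade is ``annihilated rather than regenerated at $v_{\ell}$'' is precisely the point where the argument, as proposed, would fail.
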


\begin{proof}
By Observation~\ref{obs:special}, for every $(r_1,r_2)$-normal partition $(V_1, V_2)$, $G[V_i]$ is $r_i$-special for $i=1, 2$. 

Suppose that the lemma is false and let $G$ be a counterexample.  Among the $(r_1, r_2)$-normal partitions of $G$ with the minimum number of obstructions, let $\mc P = (V_1, V_2)$ be one with the shortest maximal $\mc P$-acceptable path $v_{1}\dots v_{\ell}$. 
We refer to the minimality of the number of obstructions by saying ``by minimality of $\mc P$", and to the minimality of the $\mc P$-acceptable path by saying ``by minimality of $\ell$".

Throughout the proof, we often move some vertex $u$ that belongs to an  obstruction $A$. 
Since this destroys $A$ and results in a $(r_1,r_2)$-normal partition, the minimality of $\mc P$  implies that the move creates a new obstruction and thus the obtained partition has the same number of obstructions as $\mc P$. 
Moreover, this new obstruction  contains $u$  and  the neighbours of $u$ in the other part. This implies that the neighbours of $u$ in the other part are contained in a single $\mc P$-component $C$ (because obstructions do not have cut-vertex),   and that $C \cup u$ is an obstruction. Finally, note that an obstruction containing a digon is a symmetric digraph. 
These facts are constantly used in the proof. 

Let $A$ and $B$ be the $\mc P$-components containing $v_{1}$ and $v_{\ell}$ respectively. 
Let $X=N_{A}(v_{\ell})$. 
\medskip 

Assume  $X= \emptyset$. 
Moving $v_{1}$ creates a new $(r_1, r_2)$-normal partition $\mc P'$. Since $v_{1}$ is adjacent to $v_{2}$, the new obstruction contains $v_{2}$. Moreover, $A \setminus v_{1}$ is not an obstruction. So $v_{2}, v_{3} \dots v_{\ell}$ is a maximal $\mc P'$-acceptable path, violating the minimality of $\ell$. Hence $|X| \geq 1$.
\medskip 

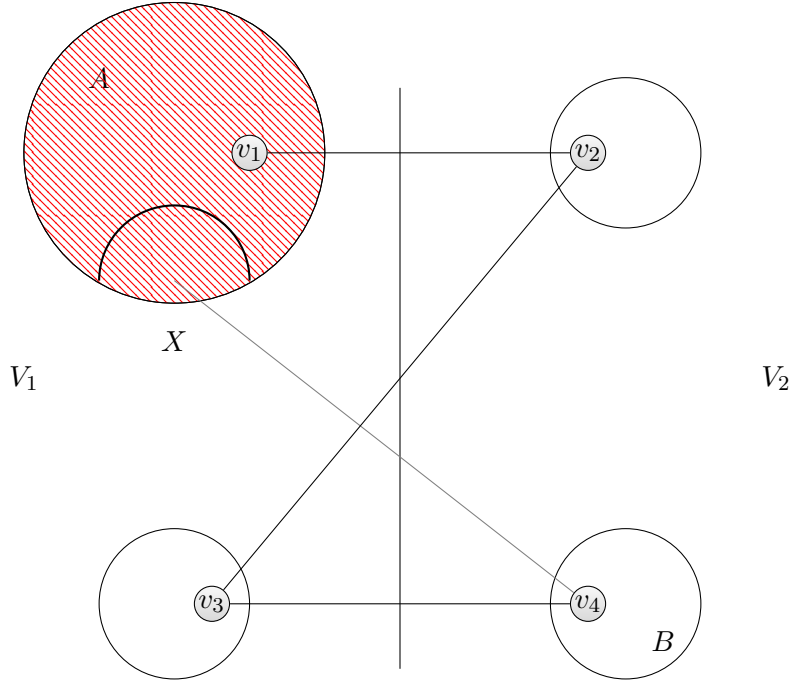
\begin{figure}[ht]
    \centering
    \begin{tikzpicture}
   
        \node (V1) at (-5,0) {$V_1$};
        \node (V2) at (5,0) {$V_2$};
    
        \node (A) at (-4,4) {$A$};
        \node (B) at (3.5,-3.5) {$B$};
        \node (X) at (-3,0.5) {$X$};
        
        \node (down) at (0,-4) {};
        \node (up) at (0,4) {};
        \draw (down) -- (up);
        
        \node (ul) at (-3,3) {};
        \node (ur) at (3,3) {};
        \node (dl) at (-3,-3) {};
        \node (dr) at (3,-3) {};

        \draw (ul) circle (2);
        \filldraw[pattern=north west lines, pattern color=red] (ul) circle (2);
        
        \draw (ur) circle (1);
        \draw (dl) circle (1);
        
        \draw (dr) circle (1);
    
        \centerarc[thick](-3,1.3)(0:180:1);

        \node (ul_v) at (-2,3) [vertex] {$v_1$};
        \node (ur_v) at (2.5,3) [vertex] {$v_2$};
        \node (dl_v) at (-2.5,-3) [vertex] {$v_3$};
        \node (dr_v) at (2.5,-3) [vertex] {$v_4$};

        \draw[gray] (dr_v) -- (-3, 1.3);
        
        \draw (ul_v) -- (ur_v);
        \draw (ur_v) -- (dl_v);
        \draw (dl_v) -- (dr_v);
    
    \end{tikzpicture}

    \caption{A partition $\mc P = (V_1, V_2)$, a maximal $\mc P$-acceptable path along with $X$, $A$ and $B$ as defined in the proof of Lemma~\ref{lem:partition}. Red colour indicates an obstruction. $G[V_1]$ is $r_1$-normal and $G[V_2]$ is $r_2$-normal.} 
    \label{fig:my_label}
\end{figure}

Assume now that $|X| \geq 2$ and let $x_1$, $x_2$ be two vertices in $X$. 
Let us first prove that $G[X \cup v_{\ell}$ is a symmetric complete graph. 
Assume that $x_1v_{\ell} \in A(G)$ (the case $v_{\ell}x_1 \in A(G)$ is similar).
As explained above (in the second paragraph of the proof), $B \cup x_1$ and $B \cup x_2$ are obstructions, which implies that $x_2v_{\ell} \in A(G)$. This is because $x_1v_{\ell}$ is an arc and obstructions are regular. 
By moving $x_1$ and then $v_{\ell}$, we get that $(A \sm x_1) \cup v_{\ell}$ is an obstruction, so $x_2x_1 \in A(G)$ (again because obstructions are regular). 
Similarly, $(A \sm x_2) \cup v_{\ell}$ is an obstruction and thus $x_1x_2 \in A(G)$. So $x_1$ and $x_2$ are linked by a digon, which implies that $v_{\ell}$ is linked to $x_1$ and $x_2$ by digons (this is again because obstructions are regular and $(A \sm x_1) \cup v_{\ell}$ and $(A \sm x_2) \cup v_{\ell}$ are obstructions).  
We deduce that $G[X \cup v_{\ell}]$ is a symmetric complete graph  

Let us now prove that $G[A\cup v_{\ell}]$ is a symmetric complete digraph. 
Since $A$ is an obstruction and $x_1$ and $x_2$ are linked by a digon, $A$  induces a symmetric digraph. 
If $A = X$ we are done, so we may assume that $A$ has at least three vertices. 
Since $(A \sm x_1) \cup v_{\ell}$ is an obstruction, $v_{\ell}$ has at least two neighbours in  $A \sm x_1$ and thus $|X| \geq 3$. 
Since $X$ induces a complete symmetric digraph, $A$ contains a symmetric triangle and thus must be a symmetric complete digraph. This implies that $G[A\cup v_{\ell}]$ is a symmetric complete digraph as announced. 

Let us now prove that $B$ and $A \cup B$ also induce a symmetric complete graph. 
Since $G[A \cup v_{\ell}]$ induces a complete symmetric digraph, for every $a \in A$, $B \cup a$ is an obstruction. This implies that each vertex of $A$ share the same neighbourhood in $B$ and that $B$ induces a symmetric digraph.  
If $B = \{v_{\ell}\}$ we are done, so $B$ has at least two vertices. Let $a \in A$. 
Since $B \cup a$ is an obstruction, $B \cup a$ contains a symmetric triangle, and thus $B$ is a complete symmetric digraph. Finally, it implies that for every $a \in A$, $B \cup \{a\} \sm \{v_{\ell}\}$ induces a complete symmetric digraph, and so $A \cup B$ induces a complete symmetric digraph. 

All together, this proves that $G[A] = \overleftrightarrow K_{r_1+1}$, $G[B] = \overleftrightarrow K_{r_2}$ (because for every $a \in A$, $G[A]$ is an obstruction i.e. is a member of $\mathcal B_{r_2}$, and is a complete symmetric digraph). So $A \cup B$ induces $\overleftrightarrow K_{r_1 + r_2 + 1} = \overleftrightarrow K_{k+1}$ and since $G$ is $k$-regular, $G = \overleftrightarrow K_{k+1}$, a contradiction with the hypothesis that $G$ is not a member of $\mc B_{k}$. 
\medskip 

We may assume from now on that $|X| = 1$. 
Assume first that $X= \{v_1\}$. 
Moving $v_1$ creates an obstruction containing both $v_2$ and $v_{\ell}$, so $\ell = 2$. 
Since the path $v_1v_2$ is a maximal $\mc P$-acceptable path, $v_2 = v_{\ell}$ has no neighbour in the other part besides $v_1$. 
Hence, after moving $v_1$ and $v_2$, $v_2$ is the only vertex in its component, and thus cannot be in an obstruction, a contradiction. 

So instead $X = \{x\}$ and $x \neq v_1$. 
Let us prove that $A = \{x,v_1\}$, that $A$ induces a digon, and that $v_{\ell}$ and $x$ are linked by a digon.  
In order to do so, move each $v_1, v_2, \dots, v_{\ell}$ in turns.  
Moving $v_1$ destroys  $A$ and thus  creates a new obstruction containing $v_2$.  
For $1 \leq i \leq \ell-2$, moving $v_i$ creates a new obstruction containing $v_{i+1}$, which in turns is destroyed by the move of $v_{i+1}$, creating a new obstruction containing $v_{i+2}$. 
Finally, after the move of $v_{\ell}$, $v_{\ell}$ is in an obstruction containing $x$ and  since $|X| = 1$, this new obstruction only contains $x$ and $v_{\ell}$, and thus is a digon. This also implied that $A = \{v_1, x\}$ and thus induces a digon. Moreover, it implies that $r_1 = 1$.   

Moving $v_1$ creates an obstruction containing $v_2$. By minimality of $\ell$, in the new partition $\mc P'$ obtained after moving $v_1$, the path $v_2v_3 \dots v_{\ell}x$ is a maximal $\mc P'$-acceptable path. 
So the obstruction containing $v_2$ (the first obstruction of a maximal acceptable path) must be a $\overleftrightarrow K_2$ (for the same reason $A$ is a $\overleftrightarrow K_2$), so $v_1$ and $v_2$ are linked by a digon and $r_2 = 1$. 
Now, moving $v_1$ and then $v_2$, the same argument can be applied to the path $v_3 \dots v_{\ell}xv_1$ implying that $v_2$ is linked to $v_3$ by a digon. 
Similarly, each $v_i$ for $i=2, \dots, \ell-1$ 
is linked to $v_{i+1}$ by a digon. 
This implies that $G$ contains a symmetric cycle of odd length (namely $v_1v_2 \dots v_{\ell}xv_1$), and since $G$ is $k$-regular and we clearly have $r_1 = r_2 =1$, $G$ is equal to this symmetric odd cycle, a contradiction. 
\end{proof}

\begin{theorem}
A connected digraph $G$ has dichromatic number at most $\dmax(G) +1$ and equality occurs if and only it is a member of $\mc B_{\dmax(G)}$. 
\end{theorem}

\begin{proof}
We proceed by induction on $\Delta_{max}$. 
Let $G$ be a connected digraph with $\dmax(G) = k \geq 2$. As usual, we may assume that $G$ is $k$-regular. 
If $G$ is a member of $\mc B_k$, then we are done, so we may assume that it is not and we need to prove that $G$ is $k$-dicolourable. 
Hence, by Lemma~\ref{lem:partition}, there exists a $(1, k-1)$-normal partition $(V_1, V_2)$ such that, for $i=1, 2$,  $G[V_i]$ is $r_i$-special and has no obstruction. 
Set $G_i = G[V_i]$ for $i=1, 2$. An obstruction in $G_1$ is a directed cycle, so $G_1$ is acyclic. 
We are now going to prove that $G_2$ is $k-1$-dicolourable. 
Let $S \subseteq V(G_2)$ be the set of vertices with maxdegree $k$ in $G_2$. 
Hence, every vertex in $V(G_2) \setminus S$ has maxdegree $k-1$ (in $G_2$) and has no $\overleftrightarrow K_{k-1}$ (because $G_2$ has no obstruction) so, by minimality of $k$, $G_2 \setminus S$ is $(k-1)$-dicolourable. 
Since $G_2$ is $(k-1)$-special, vertices in $S$ have mindegree at most $k-2$ in $G_2$. Hence, we can greedily extend a $k-1$-dicolouring of $G_2 \setminus S$ to $G_2$. Using one more colour for $V_1$, we get a $k$-dicolouring of $G$. 
\end{proof}


\section{No Brooks' analogue for $\dmin$} \label{sec:dmin}

As explained in the introduction, every digraph $G$ can be dicoloured with $\dmin(G) + 1$ colours. 
In this section, we prove that given a digraph $G$, deciding if it is $\dmin(G)$-dicolourable is $NP$-complete. 
It is thus unlikely that digraphs satisfying $\dic(G) = \dmin(G)+1$ admit a simple characterization, contrary to the digraphs satisfying $\dic(G) = \dmax(G) +1$.  

It is known that for all $k \geq 2$, $k$-\textsc{dicolourability} is NP-complete~\cite{Bo04}, where $k$-\textsc{dicolourability} is the following problem:\\
\underline{Input}: A digraph $G$.\\
\underline{Question}: Is $G$ $k$-dicolourable ?

\begin{theorem}
For all $k \geq 2$, $k$-\textsc{dicolourability} is $NP$-complete even when restricted to digraph $G$ with $\dmin(G) = k$. 
\end{theorem}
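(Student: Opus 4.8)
The plan is to reduce from the known NP-complete problem $k$-\textsc{dicolourability} to its restriction to digraphs with $\dmin(G)=k$. The core difficulty is that an arbitrary input digraph $G$ may have vertices of very large mindegree, so I need a polynomial-time gadget transformation $G \mapsto G'$ that (i) forces $\dmin(G') = k$ exactly, and (ii) preserves $k$-dicolourability, i.e. $G'$ is $k$-dicolourable if and only if $G$ is. The natural strategy is to replace each vertex of $G$ by a gadget whose internal structure has bounded mindegree but which behaves, from the point of view of dicolouring, like a single vertex (or forces a consistent colour across its parts).

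First I would fix a small ``rigid'' gadget based on the exceptions of Theorem~\ref{brooks_max_oriented}: since $\olra{K}_{k+1}$ (for $k\geq 3$) and symmetric odd cycles (for $k=2$) satisfy $\dic = k+1$ while every proper induced subdigraph is $k$-dicolourable, they give local structures that constrain colours tightly. The concrete idea is to build, for each original vertex $v$, a copy of $\olra{K}_{k+1}$ minus one vertex (equivalently, $\olra{K}_k$ together with some attachment vertices), so that in any $k$-dicolouring the ``port'' vertices of the gadget are forced to share a colour, letting the gadget simulate $v$. The arcs of $G$ are then routed between port vertices of the gadgets. I would choose the gadget so that every internal vertex has in- and out-degree exactly $k$ (hence mindegree $k$), while the port vertices, after the external arcs of $G$ are attached, also attain mindegree exactly $k$; padding arcs into fresh auxiliary $\olra{K}_k$-like gadgets can be used to top up the degree of any port vertex whose external degree in $G$ is too small, and to cap vertices of large degree I would split a high-degree vertex into a chain of bounded-degree gadgets that are forced monochromatic.

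The main obstacle, and the step I would spend the most care on, is simultaneously achieving $\dmin(G')=k$ at \emph{every} vertex and preserving the exact $k$-dicolourability equivalence. Making mindegree equal to $k$ is a two-sided constraint: I must ensure no vertex has mindegree below $k$ (ruling out cheap colourings that would make $G'$ $k$-dicolourable even when $G$ is not) and none has mindegree above $k$ (otherwise the restriction to $\dmin=k$ is not met). I would verify the forward direction (if $G$ is $k$-dicolourable then so is $G'$) by extending a given colouring of $G$ to each gadget using the fact that $\olra{K}_{k+1}$ minus a vertex is $k$-dicolourable with prescribed colour on the port, and verify the reverse direction (if $G'$ is $k$-dicolourable then so is $G$) by reading off each gadget's forced port colour as the colour of the corresponding vertex of $G$, checking that a monochromatic directed cycle in $G$ would lift to one in $G'$.

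Finally I would confirm membership in NP (a $k$-dicolouring is a polynomial-size certificate, verifiable in polynomial time by checking each colour class is acyclic) and confirm that the transformation $G\mapsto G'$ runs in polynomial time, since each of the $n$ gadgets has size $O(\poly(k))$ and $k$ is a fixed constant. Together with the established NP-hardness of unrestricted $k$-\textsc{dicolourability} from~\cite{Bo04}, this yields NP-completeness of the $\dmin(G)=k$ restricted version for all $k\geq 2$.
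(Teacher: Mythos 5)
Your overall strategy---reduce from unrestricted $k$-\textsc{dicolourability}~\cite{Bo04}, replace each vertex of $G$ by a gadget built around a symmetric clique whose port vertices are forced to share a colour in any $k$-dicolouring, route the arcs of $G$ between ports, and read the colouring equivalence off the forced port colours---is exactly the paper's, including the NP-membership argument and both directions of the colouring correspondence. But the step you yourself identify as the crux, controlling the mindegree, is where the proposal as written breaks. First, you misstate the requirement: $\dmin(G') = k$ is the \emph{maximum} over vertices of $d_{min}(v) = \min(d^+(v), d^-(v))$, so vertices of mindegree below $k$ are harmless; they do not enable ``cheap colourings'' (only $\dmin(G') \leq k-1$ globally would force $k$-dicolourability via the greedy bound, and a single gadget vertex of mindegree exactly $k$ suffices to witness equality). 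Your plan to ``top up'' low-degree ports with auxiliary gadgets is therefore unnecessary. Second, and more seriously, your capping device---splitting a high-degree vertex into a ``chain of bounded-degree gadgets that are forced monochromatic''---fails for $k \geq 3$: a single arc between consecutive links forces nothing in a dicolouring (monochromatic paths are fine, only cycles matter) and a digon forces \emph{inequality}, so the only mechanism to force two vertices to share a colour with $k$ colours available is to give them $k-1$ common neighbours joined to both by digons, as in your own $\olra{K}_{k+1}$-minus-a-digon gadget. But then every interior link of such a chain sits in two junction cliques and carries $2(k-1)$ digons, hence has in- and out-degree at least $2(k-1) > k$ for $k \geq 3$, i.e.\ mindegree exceeding $k$---precisely the violation the chain was meant to cure.

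The idea you are missing is the paper's in/out separation, which makes all padding and capping superfluous: for each vertex $u$ take one clique $\olra{K}_{k-1}$ on $u_1, \dots, u_{k-1}$, join both an in-port $u^-$ and an out-port $u^+$ to it by digons, add the single arc $u^-u^+$, and encode each arc $uv \in A(G)$ as $u^+v^-$. The shared clique forces $\phi(u^-) = \phi(u^+)$ in every $k$-dicolouring, while $d^-(u^+) = d^+(u^-) = k$ holds identically regardless of the degrees of $u$ in $G$: arbitrarily many external arcs only inflate $d^+(u^+)$ and $d^-(u^-)$, which the $\min$ ignores, and the clique vertices $u_i$ have $d^+(u_i) = d^-(u_i) = k$. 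So one fixed gadget per vertex, with no chains and no degree analysis by cases, already gives $\dmin(G') = k$ and the exact equivalence you want.
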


\begin{proof}
Let $k \geq 2$ be a fixed integer. 
As is customary, membership to $NP$ is clear.
Given a digraph $G$, we are going to construct a digraph $G'$ such that $\dmin(G') \leq k$ and $G$ is $k$-dicolourable if and only if $G'$ is $k$-dicolourable.

Let $G=(V,A)$ be a digraph. We construct $G'$ as follows:
\begin{itemize}
    \item For every vertex $u$ of $G$, put $k+1$ vertices in $G'$ : $u^-$,  $u^+$, $u_1$, \dots, $u_{k-1}$.
    \item for each vertex $u$, $G'[\{u^-, u_1, \dots, u_{k-1}\}]$ and $G'[\{u^+, u_1, \dots, u_{k-1}\}]$ are complete symmetric digraphs, and $u^-u^+ \in A(G')$. 
    \item For every $uv \in A(G)$, $u^+v^- \in A(G')$. 
\end{itemize}

For every vertex $u \in V(G)$, we have $d_{G'}^-(u^+) =d_{G'}^+(u^-) = k$ and for $i=1, \dots, k-1$, $d^+_{G'}(u_i) = k$. Hence, $\dmin(G') \leq k$. 









\begin{claim} 
If $G$ is $k$-dicolourable, then $G'$ is too. 
\end{claim}

\begin{proofclaim}
Let $\phi$ be a $k$-dicolouring of $G$. For every vertex $u$, assign to $u^-$ and $u^+$ the colour $\phi(u)$, and the $k-1$ other colours to $\{u_1, \dots, u_{k-1}\}$. We claim this is a proper $k$-dicolouring of $G'$. 
Suppose it is not. Let $C$ be a monochromatic directed cycle in $G'$. It cannot use any vertex $u_i$ as these vertices have a colour distinct from all of their neighbours. Thus $C$ only uses arcs of the form $u^-u^+$ or $u^+v^-$ which easily implies the existence of a monochromatic directed cycle in $G$, a contradiction.  
\end{proofclaim}

\begin{claim}
If $G'$ is $k$-dicolourable, then $G$ is too. 
\end{claim}

\begin{proofclaim}

Let $\phi$ be a $k$-dicolouring of $G'$. 
For every vertex $u \in V(G)$, for $i=1, \dots, k-1$, the vertices $u_i$ receive pairwise distinct colours. So, $\phi(u^+)=\phi(u^-)$. Hence, for every vertex $u \in V(G)$, assigning the colour $\phi(u^+)$ to $u$ gives a valid $k$-dicolouring of $G$. 
\end{proofclaim}
\end{proof}

\section*{Acknowledgements} 
This research was partially supported by the french Agence Nationale de la Recherche under contract DAGDigDec (JCJC) ANR-21-CE48-0012, and by the group Casino/ENS Chair on Algorithmics and Machine Learning.


\begin{thebibliography}{14}

\bibitem{AAC21}
P. Aboulker,  G. Aubian, P. Charbit. 
\newblock Decomposing and colouring some locally semicomplete digraphs..
\newblock{Accepted in  Eurocomb 2021}.  


\bibitem{ACN21}
P. Aboulker,  P. Charbit, R. Naserasr. 
\newblock Extension of Gyarfas-Sumner conjecture to digraphs.
\newblock{Accepted in  Electronic Journal of Combinatorics}.  

\bibitem{ACL19}
P. Aboulker, N. Cohen, W. Lochet, F. Havet, P. Mourra, S. Thomass\'e.
\newblock Subdivisions  in digraphs of large out-degree or large dichromatic number.
\newblock{Electronic journal of Combinatorics}, Vol. 6, 3, 2019

\bibitem{AHKR21}
P. Aboulker, F. Havet, K. Knauer, C. Rambaud.
\newblock On the dichromatic number of surfaces.
\newblock{Accepted to Eurocomb 2021}.   



\bibitem{BBSS20}
J. Bang-Jensen, T. Bellitto, T. Schweser, M. Stiebitz.
\newblock{Hajós and Ore Constructions for Digraphs.}
\newblock{\em Electronic Journal of Combinatorics}, 27(1):1-63, 2020. 

\bibitem{BSS21}
J. Bang-Jensen, T. Schweser, M. Stiebitz.
\newblock{Digraphs and variable degeneracy.}
\newblock{\em submitted}. 


\bibitem{BHL18}
J. Bensmail, A. Harutyunyan, N. K. Le.
\newblock{List colouring digraphs}
\newblock{Journal of Graph Thoery}, 87:492-508, 2018.

\bibitem{hero}
E~Berger, K.~Choromanski, M.~Chudnovsky, J.~Fox, M.~Loebl, A.~Scott, P.~Seymour and S.~Thomass\'e.
\newblock{Tournaments and colouring},
\newblock{Journal of Combinatorial Theory, Series B}, 112:1--17, 2015.

\bibitem{Bo04} D. Bokal, G. Fijavz, M. Juvan, P. M. Kayll, B. Mohar. The circular chromatic number of a digraph, J. Graph Theory, 46 (2004), pp. 227--240.

\bibitem{BK77}
O.V. Borodin and A.V. Kostochka.
\newblock On an upper bound of a graph's chromatic number,
depending on the graph's degree and density, 
\newblock {\em Journal of Combinatorial Theory, Series
B} ,\textbf{23} no. 2-3, 247-250, 1977.

\bibitem{B41}
R. L. Brooks,
\newblock{On colouring the nodes of a network},
\newblock{Math. Proc. Cambridge Philos. Soc.}, 37:194--197, 1941.


\bibitem{CR14}
D. W. Cranston, L. Rabern. 
\newblock Brooks' Theorem and Beyond.
\newblock{\em Journal of Graph Theory} 80(3):199-225, 2014.

\bibitem{GSS20}
L. Gishboliner, R. Steinerand T. Szabo.
\newblock Dichromatic number and forced subdivisions.
\newblock{\em Submitted}, 2020

\bibitem{HLNT19}
A. Harutyunyan, T-N. Le, A. Newman and S Thomass\'e, 
\newblock Coloring dense digraphs.
\newblock  {\em Combinatorica}, 39:1021--1053, 2019.

\bibitem{HMGallai}
A. Harutyunyan and B. Mohar, 
\newblock Gallai's Theorem for list coloring of digraphs. 
\newblock{\em Siam J. Discrete Math} 25, 170-180, 2011.


\bibitem{HM12}
A. Harutyunyan and B. Mohar, Two  results  on  the  digraph  chromatic  number, 
\newblock{\em Discrete Math.}, 312:1823–1826, 2012.

\bibitem{HM11}
A. Harutyunyan and B. Mohar. 
\newblock Gallai's Theorem for List Coloring of Digraphs.
\newblock {\em  SIAM Journal on Discrete Mathematics} 25(1):170-180, 2011.

 \bibitem{H17}
W. Hochst\"attler
\newblock A flow theory for the dichromatic number
\newblock {\em European Journal of Combinatorics}, 66:160--167, 2017. 


\bibitem{HK15} R. Hoshino , K. Kawarabayashi, The edge density of critical digraphs,
\newblock{\em Combinatorica}, 35:619--631, 2015.



\bibitem{KS20}
A.V. Kostochka and M. Stiebitz 
\newblock The Minimum Number of Edges in 4-Critical Digraphs of Given Order. 
\newblock {\em Graphs and Combinatorics}, 36:703--718, 2020. 

\bibitem{KV12}
K. Knauer and P. Valicov.
Cuts in matchings of 3-connected cubic graphs
\newblock{ \em European Journal of Combinatorics} 76:27-36, 2019. 

\bibitem{L66}
L. Lov\'asz. 
\newblock{On decomposition of graphs}
\newblock{\em Studia Sci. Math. Hungar}, \textbf{1} 237:238, 1966.

\bibitem{L75}
L. Lov\'asz. 
\newblock{Three short proofs in graph theory}. 
\newblock{\em J. Combin. Theory, Ser. B}, 19(3):269-271, 1975.


\bibitem{LM17} Z. Li and B. Mohar. Planar digraphs of digirth four are 2-colourable.
\newblock{\it SIAM J. Discrete Math.}, 31:2201--2205, 2017.

\bibitem{M03}
B. Mohar. 
\newblock circular colourings of edge-weighted graphs. 
\newblock {\em Journal of Graph Theory}, 43:107-116, 2003.

\bibitem{M10}
B. Mohar.
\newblock Eigenvalues and colourings of digraphs.
\newblock {\em Linear Algebra and its Applications}, 432(9):2273-2277 2010.


\bibitem{NL82}
V. Neumann-Lara.
\newblock The dichromatic number of a digraph. 
\newblock {\em Journal of Combinatorial Theory, Series B}, 33:265--270, 1982.


\bibitem{R14}
L. Rabern.
\newblock A Different Short Proof of Brooks’ Theorem.
\newblock{\em Discussiones Mathematicae Graph Theory},  34(3):633-634, 2014. 


\bibitem{S19}
Steiner, R.
\newblock A Note on Graphs of Dichromatic Number 2.
\newblock {\em Discrete Mathematics \& Theoretical Computer Science}. vol. 22 no. 4, 2021.

\bibitem{S20}
R. Steiner. 
\newblock A note on colouring digraphs of large girth, 
\newblock{\em Discrete Applied Mathematics}, 287:62-64, 2020.

\bibitem{S21}
R. Steiner. 
\newblock On coloring digraphs with forbidden induced subgraphs.
\newblock{\em Submitted}. 

\bibitem{T83}
H. Tverberg.
\newblock On Brooks’ theorem and some related results.
\newblock{\em Mathematics Scandinavia}, 52:37-40, 1983.

\end{thebibliography}
\end{document}